\newtheorem{proposition}{Proposition}
\newcommand{\argmax}{\operatornamewithlimits{argmax}}
\newtheorem{corollary}{Corollary}
\def\N{{\mathbb N}}
\def\P{{\mathbb P}}
\def\E{{\mathbb E}}
\def\cal{\mathcal}
\def\eps{\varepsilon}
\def\etal{{\em et al. }}
\newcommand\ind[1]{\mathbbm{1}_{\left\{#1\right\}}}
\title{Traffic Capacity of Large WDM Passive Optical Networks}
\author[N. Antunes]{Nelson Antunes}
\address[N. Antunes]{INRIA Paris--Rocquencourt, France on leave of absence of University of the Algarve, Faro, Portugal}
\email{NAntunes@ualg.pt}
\author[C. Fricker]{Christine Fricker}
\address[Ch. Fricker, Ph. Robert, J. Roberts]{INRIA Paris --- Rocquencourt,  Domaine de Voluceau, 78153 Le Chesnay, France.}
\email{Christine.Fricker@inria.fr}
\author[Ph. Robert]{Philippe  Robert}
\email{Philippe.Robert@inria.fr}
\urladdr{http://www-rocq.inria.fr/\string~robert}
\author[J. Roberts]{James  Roberts}
\email{James.Roberts@inria.fr}
\begin{document}

\begin{abstract}
As passive optical networks (PON) are increasingly deployed to provide high speed Internet
access, it is important to understand their fundamental traffic capacity limits. The paper
discusses performance  models applicable to  wavelength division multiplexing  (WDM) EPONs
and  GPONs under the  assumption that  users access  the fibre  via optical  network units
equipped  with  tunable  transmitters.  The  considered stochastic  models  are  based  on
multiserver polling systems  for which explicit analytical results are  not known. A large
system asymptotic,  mean-field approximation, is used  to derive closed  form solutions of
these complex systems. Convergence  of the mean field dynamics is proved  in the case of a
simple network  configuration.  Simulation results show  that, for a  realistic sized PON,
the mean field approximation is accurate.
\end{abstract}

\maketitle

\section{Introduction}
Figure \ref{fig:ponpicture} illustrates the main components of a passive optical network
(PON). Optical network units (ONUs) are situated in or close to user premises. An ONU
might be dedicated to an individual user or, when fiber is terminated at the curb or at
the building, it could concentrate the traffic of several users. The ONUs  are controlled
from an optical line termination (OLT) equipment that realizes the interface with the wide
area network. Passive splitters are used to broadcast downstream optical signals to the
ONUs and to merge upstream signals destined to the OLT.  

To  increase  the  capacity of  a  PON,  it  is  proposed  to employ  wavelength  division
multiplexing (WDM) to multiply the capacity of the physical fibre. Typically the ONUs will
not be equipped to use all wavelengths  simultaneously. In this paper we assume an ONU has
one or more transmitters that can be  tuned to any wavelength. The spectrum is thus shared
efficiently  but   the  capacity  of   an  ONU  is   limited  by  its  quota   of  tunable
transmitters\footnote{Equivalent capacity constraints  would arise if the PON  were to use
one or more reflective semi-conductor optical amplifiers in the ONUs~\cite{Mai09}}.

\begin{figure}[htp]
\centering
\scalebox{0.7}{\input{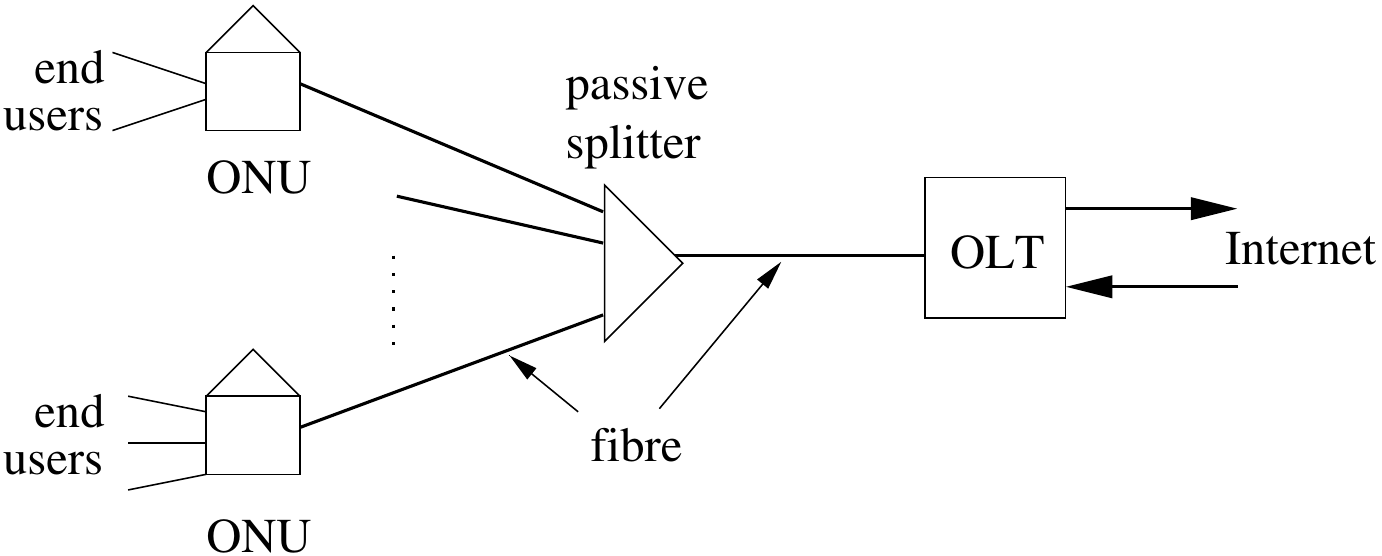_t}}
\caption{PON components: ONUs, passive splitters and OLT}\label{fig:ponpicture}
\end{figure}

The major PON design concern is the dynamic bandwidth allocation (DBA) algorithm
implemented by the OLT to arbitrate use of upstream capacity and to avoid
collisions. There are two PON standards, EPON from IEEE and GPON from ITU, that differ in
the way they manage resource sharing. EPON relies on asynchronous transmission of Ethernet
packets while GPON implements a synchronous 125$\mu$s frame enabling both circuit and
packet switching. We develop generic models for packet based services that are applicable
for both standards.  

The focus of the paper is on traffic capacity, the limit on demand beyond which some ONU
queues would always be saturated. This is arguably the most critical performance for the
PON whose very high bit rate ensures delays are tiny at normal loads and only become
significant as demand attains the capacity limit (see~\cite{AFRR10,MRC07}).

We assume the OLT  manages a single queue per user, receiving  queue size \textit{reports} from the
ONUs and attributing corresponding {\em grants}  for upstream transmission. A grant is the
time interval during which a user may transmit. We maintain that such simple management at
PON level  (i.e., no accounting for intra-ONU class of
service differentiation at the OLT)  is highly  desirable to  limit complexity  and generally  meets user
requirements, the  OLT ensures users receive a fair share of bandwidth,  the ONUs realize
what service differentiation is needed within that share by choosing which packets to send
to fulfill the allocated grants.

The notion of fair allocation naturally leads to a form of polling. For EPON we define a
DBA that realizes a classical polling system with a particular interpretation for the
switchover time between queues. The same model can be applied to one possible DBA
implementation for GPON.  A second considered GPON DBA, on the other hand, is more
naturally modelled as a head of line processor sharing system. See below for more detail. 

The underlying polling systems are complicated by the fact that ONU capacity is limited by
the number  of tunable transmitters  they have,  i.e., the number  of servers that  can be
present simultaneously at an ONU. In particular, stability conditions of multiserver polling
systems with limited-gated policies appear intractable in general.


\paragraph*{Related work}
There  is  an  abundant  literature  on   PONs  and  numerous  DBA  algorithms  have  been
proposed.  Two   recent  surveys,   by  McGarry~\etal  \cite{MRM08}   and  by   Zheng  and
Mouftah~\cite{ZM09}, present  a comprehensive review  for EPON. Skubic \etal~\cite{SCA09}  compare DBAs
used in EPON with alternatives proposed for  GPON. The DBA algorithm we propose is similar
in  principle to  IPACT, proposed  by  Kramer~\etal \cite{KMP02}  for a  regular EPON  and
extended by  Kwong \etal for WDM~\cite{KHA04}.   More complex class of  service aware DBAs
for a  WDM EPON  are proposed  in~\cite{MRM06a, MRM06b, MRC07,  DAM07,MAM09}.  A  paper by
Gliwan \etal discusses DBA for a  coarse WDM GPON~\cite{GKS09}.  Song \etal recognize that
classical EPON  DBAs can lead to  lost capacity in long  reach EPONs and  suggest the ONUs
should  initiate  multiple interleaved  cycles  or  threads~\cite{SKM09}.  We advocate  an
alternative DBA that avoids loss of capacity whatever the propagation times~\cite{AFRR10}.
 
Most performance  evaluations of DBA  algorithms rely on  simulation.  A few  authors have
developed  analytical models,  generally using  techniques developed  for  polling systems.
Park \etal derived closed form formulas for  the mean packet delay in a symmetric PON with
identical ONUs  and negligible propagation times  under the assumption  of Poisson traffic
and  gated service discipline~\cite{PHR05}.   The two-stage  polling system  identified in
that paper has been further analyzed by van der Mei and Resing \cite{ MR08} when ONUs have
heterogeneous load.  To the best of  our knowledge, polling models have not been developed
explicitly  for WDM  PONs. The  capacity  of classical multiserver  polling models  was analysed  by
Fricker and Jaibi~\cite{FJ98}. Borst and van der Mei introduced the notion of server limit
in \cite{BM98}  but did not  evaluate capacity limits.  Down~\cite{Down98}  presents 
several capacity results for multiserver  polling models with server limits but with unlimited service policies. 
Lastly, the study of head  of line processor sharing systems by Brandt and Brandt is relevant but
unfortunately provides no exact capacity results~\cite{BB07}.  

\vspace{2mm}
\noindent
{\em Mean  Field Limits  of Polling Systems}:  To overcome  the complexity of  the polling
systems associated with these networks, the main contribution of the paper consists in the
analysis  of a  large  system limit  where  both the  number  of ONUs  and  the number  of
wavelengths grow indefinitely together.  This scaling, also called {\em mean field limit},
provides explicit formulas and simple numerical algorithms to derive useful approximations
for realistically sized PONs. It is based  on the fact that, when the approximation holds,
the ONUs become statically independent  in the limit, their interaction being represented
through a  deterministic equation.  See~\cite{Sznitman:06}  for a general  presentation of
these methods.   Mean field limits  have been  used for some  time now in  queueing theory
notably to study circuit switched networks, see~\cite{Meleard:94} for example.

In Section~\ref{sec:PON} we  outline our proposed DBAs for EPON and  GPON.  We then recall
relevant  multiserver  polling system  results  and discuss  the  difficulty  of an  exact
analysis   of   the   WDM   PON.    The   large  system   asymptotic   is   developed   in
Section~\ref{sec:largesystem}  where  practically  useful  traffic  capacity  results  are
derived using the mean field assumption.  This property is proved for a particular network
configuration in Section~\ref{sec:convergence}. The validity of this assumption is tested,
in  penultimate Section~\ref{sec:evaluate}  and  used  to evaluate  the  impact of  switch
overhead on PON traffic capacity. Section~\ref{Conc} concludes the paper.

\section{Dynamic bandwidth allocation}\label{sec:PON}

The DBAs we consider for both EPON and GPON relate to a single, classless queue per
user. The objective is to ensure low latency by imposing grant limits per visit and to
ensure users receive adequate throughput by controlling fairness.  

\subsection{EPON DBA}
\noindent
The EPON DBA is governed by the multi-point control protocol (MPCP) based on the exchange
of REPORT and GATE messages between OLT and ONUs, as specified in IEEE 802.3ah. An
extension to WDM is proposed by McGarry \etal~\cite{MRM06a}. ONUs report current contents
of their user queues via REPORT messages while GATE messages are sent downstream by the OLT
to allocate grants for each queue.  In~\cite{AFRR10}, we proposed a novel DBA where the
OLT precisely times the sending of GATE messages and the upstream return of data packets
and REPORTs to maximize traffic capacity. The timing is designed to compensate for
differences in propagation times and to avoid wasting capacity when propagation times are
large. 

The EPON behaves then like a polling system. The ONUs are ``visited'' by each wavelength
following a certain schedule. The visit enables the transmission of upstream data
and a REPORT. 
The grants account for maximum per queue quotas. Each visit incurs a switch overhead time equal to the time needed to send the REPORT plus a physical layer guard time. The
overhead is around 2$\mu$s for a 1Gb/s EPON \cite{SCA09}. The OLT is aware of all
allocations and readily implements constraints on the number of simultaneous transmissions
an ONU can sustain. 

\subsection{GPON DBA}
\noindent
The GPON standards define a particular framework for DBA that allows resource sharing
between several communication services (see the G.984 series of ITU recommendations). We
consider only the packet switched service. Ethernet packets are transmitted within GPON
transmission convergence  (GTC) layer frames of 125$\mu$s fixed duration. Downstream GTC
frames are also used to convey grants from OLT to ONU users and reports on queue contents
are included in upstream GTC frames from ONU to OLT.  Grants in a downstream GTC frame
attribute transmission time slots in a precise upstream GCT frame occurring after a fixed
time offset.  The offset is calculated to allow all ONUs to fulfill their grants
accounting for processing time and propagation delays.  

It is possible to design a DBA in this framework that emulates a polling
system. Wavelength capacity is allocated sequentially to ONUs for data transmission, or
just for reporting if the queues are currently empty. Overhead is incurred on switching
from one ONU to the next, as for the EPON. However, this overhead is rather small
according to the GPON specification, accounting for only some 100$\eta$s \cite{SCA09}. An additional
fixed overhead is incurred at the start of each frame. 

The GPON DBA might alternatively consist in providing every ONU queue a share of every
upstream GTC frame on one or several wavelengths. The OLT would calculate an allocation
for each queue and a schedule of transmissions based on previously received reports and
accounting for ONU transmission capacity. In this way, there is a fixed overhead per
frame, accounting for reports and physical constraints, to be subtracted from the overall
capacity. The remaining capacity would be shared between users according to some fairness
policy. 

\section{Multiserver polling systems}
We introduce notation and discuss known polling system results that can be used to analyse
WDM PONs. 
\subsection{Notation and assumptions}
 In the following wavelengths are  identified with servers  circulating among ONUs.   The polling
system has $N$ ONUs and $L$ wavelengths.   ONU $i$ is equipped with $t_i$ tunable transmitters
and manages  $n_i$ independent user  queues, i.e., $t_i$  servers can be present  at ONU~$i$
simultaneously. When  the DBA emulates  a polling system,  the $j^{th}$ queue of  ONU $i$,
queue $(i,j)$, has a  maximum grant of $d_{ij}$ seconds per visit  by any wavelength.
This  mechanism is known as the limited-gated service policy.  Finally, $\Delta_i$ 
is the switch overhead time associated with ONU $i$. 

We make no assumptions about the traffic in the user queues except that the packet arrival
process is stationary and has a well  defined rate. The traffic intensity of queue $(i,j)$
is $\rho_{ij}$ expressed in  units of the bit rate of one  wavelength. We define $\rho_i =
\sum_j \rho_{ij}$ and $\rho = \sum_i \rho_i$.

A wavelength is assumed to visit ONUs  in a fixed cyclic order, assumed different for each
wavelength. For  \textit{periodic} polling, visits to  ONUs occur  according to a
deterministic schedule.  A wavelength that should visit an ONU with no free transmitter is
assumed to skip  directly to  the next ONU in  the schedule. Alternatively, for
\textit{random} polling, the next ONU  to visit is chosen  at  random  from  the  ONUs
that  currently have  a free  transmitter. 

\subsection{Limited grants, no server limits}
In this section we assume $t_i=L$ for all $i$ -- an ONU can be served by an arbitrary number
$\leq L$ of servers with a limited-gated policy. Polling is either periodic or random.
\begin{proposition}
\label{prop:multiserver}
All queues in all ONUs are stable if and only if, for all $i,j$, 
\begin{equation}
\rho + \frac{\rho_{ij}}{d_{ij}} \sum_{k=1}^N \Delta_k < L.
\label{eq:multistable}
\end{equation}

When not all queues satisfy (\ref{eq:multistable}), there is a subset $\mathcal{S}$ whose
members are saturated while, for $(i,j) \notin \mathcal{S}$, we have  
\begin{equation}
\rho - \sum_{(l,k) \in \mathcal{S}} \rho_{lk} + \frac{\rho_{ij}}{d_{ij}} \left( \sum_{k=1}^N\Delta_k+\sum_{(l,k) \in \mathcal{S}} d_{lk}\right) < L.
\label{eq:multiunstable}
\end{equation}
\end{proposition}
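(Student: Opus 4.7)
The plan is to prove both (\ref{eq:multistable}) and (\ref{eq:multiunstable}) via a cycle-length conservation argument in the spirit of Fricker and Jaibi \cite{FJ98}. The basic observation is that each of the $L$ wavelengths independently executes a cyclic polling schedule. In any stable regime, if $C$ denotes the stationary mean cycle length of a wavelength, a fraction $\sum_k \Delta_k / C$ of time is spent on switchovers and the remainder is devoted to actual service. Balancing total service capacity against total arrivals yields $L(1 - \sum_k \Delta_k/C) = \rho$, i.e., $C = L\sum_k \Delta_k/(L-\rho)$.

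For necessity of (\ref{eq:multistable}), I would combine the cycle-length formula with the limited-gated constraint: queue $(i,j)$ is visited by each of the $L$ wavelengths exactly once per cycle of mean length $C$ and receives at most $d_{ij}$ service per visit, so its maximum throughput is $Ld_{ij}/C$. Stability of $(i,j)$ requires $\rho_{ij}<Ld_{ij}/C$, which upon substituting the cycle-length expression produces exactly (\ref{eq:multistable}). For sufficiency, I would use a Loynes-type stationarity argument coupling the original system with a reference system in which each queue $(i,j)$ is serviced independently at rate slightly above $\rho_{ij}$; stability of this reference system follows from standard multiserver polling results \cite{FJ98}, and backward coupling transfers stability to the original system as long as the inequality in (\ref{eq:multistable}) is strict for every $(i,j)$.

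For the partially unstable case the saturated subset $\mathcal{S}$ is identified iteratively. A queue $(l,k)$ that violates the current form of (\ref{eq:multistable}) must remain backlogged in stationary regime and therefore consumes exactly its full grant $d_{lk}$ at each visit. The saturated queues thus contribute, per cycle and per wavelength, a deterministic service time $\sum_{(l,k)\in\mathcal{S}} d_{lk}$ that is unavailable to the remaining queues, inflating the effective per-cycle overhead from $\sum_k \Delta_k$ to $\sum_k \Delta_k + \sum_{(l,k)\in\mathcal{S}} d_{lk}$; simultaneously the residual elastic load is $\rho - \sum_{(l,k)\in\mathcal{S}} \rho_{lk}$. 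Re-applying the cycle-length balance to the non-saturated queues with these modified parameters produces (\ref{eq:multiunstable}), and the iteration halts once all remaining queues satisfy the modified bound.

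The main obstacle will be rigorously justifying the saturation dichotomy, namely that in stationary regime every queue is either governed by (\ref{eq:multiunstable}) or permanently saturated, and that this characterisation is independent of whether polling is periodic or random. A fluid-scaling argument along the lines of \cite{FJ98,Down98} should handle both points, since the per-visit grant $d_{ij}$ is enforced by each wavelength independently and the average switchover cost per cycle is invariant under the visit order; the hypothesis $t_i=L$ removes server-limit complications and decouples the wavelengths sufficiently that the problem reduces, in aggregate, to an $L$-fold scaled version of the single-server limited-gated polling system.
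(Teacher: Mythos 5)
Your rate-conservation computation is sound and reproduces both formulas correctly: with $C$ the per-wavelength cycle length, balancing $L(1-\sum_k\Delta_k/C)=\rho$ and bounding the throughput of queue $(i,j)$ by $Ld_{ij}/C$ gives exactly (\ref{eq:multistable}), and redoing the balance with the saturated queues contributing their full grants as additional per-cycle overhead gives exactly (\ref{eq:multiunstable}). Be aware, though, that the paper's own proof is essentially a citation: condition (\ref{eq:multistable}) is Theorem~3 of \cite{FJ98} (or Theorem~2.4 of \cite{Down98}) and (\ref{eq:multiunstable}) comes from the discussion in Section~4 of \cite{FJ98}, with the only remark being that grant limits here are expressed in transmission time rather than packet counts. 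Your derivation is the standard heuristic underlying those results, and your time-based formulation of the grants automatically absorbs the modification the paper mentions; also note your $C$ is the per-wavelength cycle, $L$ times the per-ONU inter-visit time of the paper's second proposition, but you use it consistently.

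The one genuinely weak step is sufficiency. The conservation argument only yields necessity (and only with $\leq$; the strict inequality at the boundary needs separate care), and your proposed Loynes-plus-backward-coupling route has two problems. First, it is circular as stated: "stability of the reference system follows from standard multiserver polling results \cite{FJ98}" invokes precisely the theorem being proved. Second, limited-gated polling systems are not monotone in the naive pathwise sense --- adding work perturbs the visit schedule and the per-visit grants in ways that defeat a direct coupling --- which is exactly why \cite{FJ98} and \cite{Down98} resort to dominant-system and fluid-limit arguments rather than Loynes' scheme. You partly acknowledge this by falling back on fluid scaling for the "saturation dichotomy", and that is indeed where the real work lies; if you intend the proof to be self-contained rather than a citation like the paper's, that fluid/dominant-system argument is the part you would actually have to write out.
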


\begin{proof}
Condition (\ref{eq:multistable}) derives from Theorem 3 in \cite{FJ98} or Theorem 2.4 in
\cite{Down98} with appropriate changes in notation. Relaxed conditions in
(\ref{eq:multiunstable}) are deduced from the discussion in Section 4 of \cite{FJ98}. In
fact, the results of \cite{FJ98} and \cite{Down98} must be modified slightly since grant
limits here apply to transmission time and not packets but this is straightforward. 
\end{proof}

We later require the following additional result.
\begin{proposition}
When the system is stable, the mean time $C$ between successive visits of wavelengths to any given ONU is
$C = (\Delta_1+\Delta_2+\cdots+\Delta_N) /(L - \rho)$.
\end{proposition}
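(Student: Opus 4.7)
The plan is to compute the mean cycle time of a single wavelength by a work--conservation argument and then observe that the mean inter-visit time at any fixed ONU is just the cycle time divided by the number of wavelengths $L$.

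First, I would set up the picture of a wavelength's activity. By hypothesis each wavelength visits the $N$ ONUs in a fixed cyclic order (and since $t_i=L$, it never skips), so one cycle of the wavelength consists of $N$ service periods, one at each ONU, interleaved with $N$ switchovers whose durations sum to $\Delta_1+\cdots+\Delta_N$. Let $T$ denote the steady-state mean cycle time of a wavelength. Then the long-run fraction of time spent in switchover by this wavelength is $(\Delta_1+\cdots+\Delta_N)/T$, and the long-run fraction of time spent actually serving queues is its complement.

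Second, I would pin down this serving fraction via throughput. Stability of the system (Proposition~\ref{prop:multiserver}) means that the aggregate output rate of the $L$ wavelengths equals the aggregate input rate $\rho$, measured in units of one wavelength's bit rate. Hence the total fraction of time the $L$ wavelengths are busy transmitting sums to $\rho$. By the symmetry of the $L$ wavelengths in steady state (they all poll the same set of ONUs with the same parameters $\Delta_i,d_{ij}$, differing only in cyclic order), each wavelength is busy a fraction $\rho/L$ of time and idle-switching a fraction $1-\rho/L$ of time. Combining with the previous paragraph,
\begin{equation*}
1-\frac{\rho}{L} \;=\; \frac{\Delta_1+\cdots+\Delta_N}{T},
\qquad\text{so}\qquad
T \;=\; \frac{L(\Delta_1+\cdots+\Delta_N)}{L-\rho}.
\end{equation*}
Since each wavelength visits a given ONU exactly once per cycle, the aggregate visit rate at any fixed ONU is $L/T$, and the mean inter-visit time is its reciprocal $T/L=(\Delta_1+\cdots+\Delta_N)/(L-\rho)$, which is the claim.

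The one point that needs care is the symmetry step claiming each wavelength carries load $\rho/L$, since the cyclic orders are different. I would justify it either by invoking stationarity (all wavelengths induce the same stationary marked point process of visits at each ONU, so their per-ONU mean busy times coincide), or, to sidestep the issue entirely, by summing the identity $\rho^{(w)} + (\Delta_1+\cdots+\Delta_N)/T_w = 1$ over all $L$ wavelengths using $\sum_w \rho^{(w)}=\rho$, together with the fact that in a stationary regime the mean cycle times $T_w$ are common to all wavelengths. This aggregated form directly yields $L=\rho+L(\Delta_1+\cdots+\Delta_N)/T$ and hence the formula, avoiding any appeal to per-wavelength symmetry.
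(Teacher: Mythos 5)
Your argument is correct and rests on the same conservation fact as the paper's proof --- in steady state the mean number of wavelengths transmitting data is $\rho$, hence $L-\rho$ wavelengths on average are consumed by switch overhead --- but you organize the accounting per \emph{wavelength cycle}, whereas the paper organizes it per \emph{ONU}: the expected number of servers currently in overhead at ONU $i$ is (visit rate to ONU $i$) $\times$ (overhead per visit) $=\Delta_i/C$, and summing over $i$ gives $\sum_i\Delta_i/C=L-\rho$ directly. The difference matters for scope. Your decomposition presupposes that each wavelength performs a fixed cycle visiting every ONU exactly once, so it establishes the result for periodic polling only; the paper's per-ONU bookkeeping never mentions a cycle and is therefore also valid for random polling (where the next ONU is chosen at random and no cycle exists), which is precisely the extension the paper advertises for its ``alternative proof.'' Your closing aggregation trick, summing $\rho^{(w)}+(\sum_i\Delta_i)/T_w=1$ over wavelengths, still needs the $T_w$ to be equal; the cleaner schedule-free aggregation is over ONUs, using that the total visit rate to a fixed ONU is $\sum_w 1/T_w=1/C$, which removes both the symmetry assumption and the dependence on a cyclic schedule. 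So: correct for the periodic case, with a slightly less general decomposition than the paper's.
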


\begin{proof}
The  result may  be  deduced  for periodic  polling  from Section  3  in \cite{BM98}.  The
following alternative proof is valid also for random polling.  Since the system is stable,
the mean number  of wavelengths transmitting data is  $\rho$. Thus $L - \rho$  is the mean
number of wavelengths currently busy due to switch overhead. The fraction of time a server
at ONU $i$ experiences a switch overhead  is $\Delta_i/C$. The relation for $C$ is obtained by
summing up these quantities.
\end{proof}

\subsection{Server limits, no grant limits}
The capacity of a multiserver polling system with server limits but unlimited gated service with
periodic or random polling is given by the following.

\begin{proposition}
\label{prop:tunable}
When the WDM EPON with tunable transmitters has unlimited gated service, it is 
stable if $\rho_i < t_i$ for $1\leq i\leq N$ and $\rho_1+\rho_2+\cdots+\rho_N < L$.
\end{proposition}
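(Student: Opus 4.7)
The plan is to split the proof into necessity and sufficiency, with necessity following from elementary capacity considerations and sufficiency reducing essentially to Down's multiserver polling stability results~\cite{Down98}, which were already invoked implicitly in Proposition~\ref{prop:multiserver}.

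For necessity, I would argue that ONU $i$ can host at most $t_i$ wavelengths simultaneously, so its long-run available service rate is bounded by $t_i$; hence $\rho_i \geq t_i$ forces some queue in ONU $i$ to saturate. Similarly, there are only $L$ wavelengths available across the whole system, so the aggregate offered load must satisfy $\sum_i \rho_i < L$ for the backlog to remain bounded. Both stated conditions are therefore necessary.

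For sufficiency, the strategy is to identify the system as an instance of the multiserver polling model with server limits and unlimited gated discipline treated in~\cite{Down98}. Under that discipline, a wavelength visiting ONU $i$ serves exactly the work gated on arrival, so the per-cycle work bound underlying the standard Foster--Lyapunov / fluid stability argument applies, with local capacity $t_i$ at each ONU and global capacity $L$. The strict inequalities $\rho_i < t_i$ and $\sum_i \rho_i < L$ then give a strictly negative drift of the total workload over a sufficiently long cycle, yielding positive recurrence. The argument covers random polling directly, since the next ONU is chosen memorylessly among the currently non-saturated ones; the periodic polling case is treated by the same arguments once the skip-if-full rule is incorporated into the schedule.

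The step I expect to be the main obstacle is precisely the formal verification that the skip-if-full rule in the periodic case does not break the cycle structure exploited in~\cite{Down98}, since a skipped ONU shifts visits and introduces dependencies between the wavelengths. Handling this cleanly either requires adapting Down's workload decomposition or constructing a dominating polling system without skips for which the same stability conditions apply. Once this is settled, the proof reduces to matching notation and citing the relevant theorem of~\cite{Down98}.
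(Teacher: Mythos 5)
Your approach---necessity by capacity counting, sufficiency by a fluid-limit drift argument in the style of \cite{Down98}---is exactly the paper's. The one point to flag is that you assert the key step (``the strict inequalities then give a strictly negative drift'') rather than derive it, and that is where all the content of the proof lies: the two conditions enter through a case split on the set $\mathcal{F}(t)$ of ONUs with a fluid backlog. If $\sum_{i\in\mathcal{F}(t)} t_i \geq L$, every wavelength is busy serving fluid, so the total fluid content has derivative $\sum_i \rho_i - L < 0$ by the global condition; if $\sum_{i\in\mathcal{F}(t)} t_i < L$, each backlogged ONU is served at its full rate $t_i$ and the drift is $\sum_{i\in\mathcal{F}(t)}(\rho_i - t_i) < 0$ by the local conditions. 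By contrast, the obstacle you single out---whether the skip-if-full rule breaks the cycle structure under periodic polling---does not arise at this level: with \emph{unlimited} gated service the time spent switching is negligible on the fluid scale, so only the instantaneous service rates in the two cases above matter, and the paper treats periodic and random polling by the same two-line argument without any dominating construction.
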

\begin{proof}
The conditions are clearly necessary since otherwise at least one user queue would be
unstable. To prove sufficiency, we use fluid limit arguments as in \cite{FJ98, Down98}.
Starting from any initial conditions, it is necessary to prove that the overall fluid
content denoted ${\overline{W}}(t)$ decreases with time. Let $\mathcal{F}(t)$ designate
the number of ONU queues that have a fluid backlog at $t$. If $\sum_{i \in \mathcal{F}(t)}
t_i \geq L$, all the service capacity is devoted to fluid queues and
$\dot{\overline{W}}(t) = \sum_{i} \rho_i - L < 0 $  by the second condition. If $\sum_{i \in \mathcal{F}(t)} t_i < L$, all queues in
$\mathcal{F}(t)$ will be served at their maximum rate. We have, therefore,
$\dot{\overline{W}}(t) = \sum_{i \in \mathcal{F}(t)}(\rho_i - t_i)< 0 $. 
\end{proof}
 
Combining server limits and grant limits leads to a system whose traffic capacity turns out
to be intractable, even for a simple toy example with 3 ONUs with a single queue, 2 wavelengths and no
switch overhead \cite{AFRR10}. This motivates the large system approximations discussed in
the next section. 
 
\section{Large system asymptotic}\label{sec:largesystem} 
In this section we derive the traffic capacity of a WDM PON with tunable transmitters
under the assumption that a mean field limit is valid in large systems. 

\subsection{Negligible switch overhead} \label{sec:nooverhead}

\begin{figure}
\centering
\scalebox{0.7}{\input{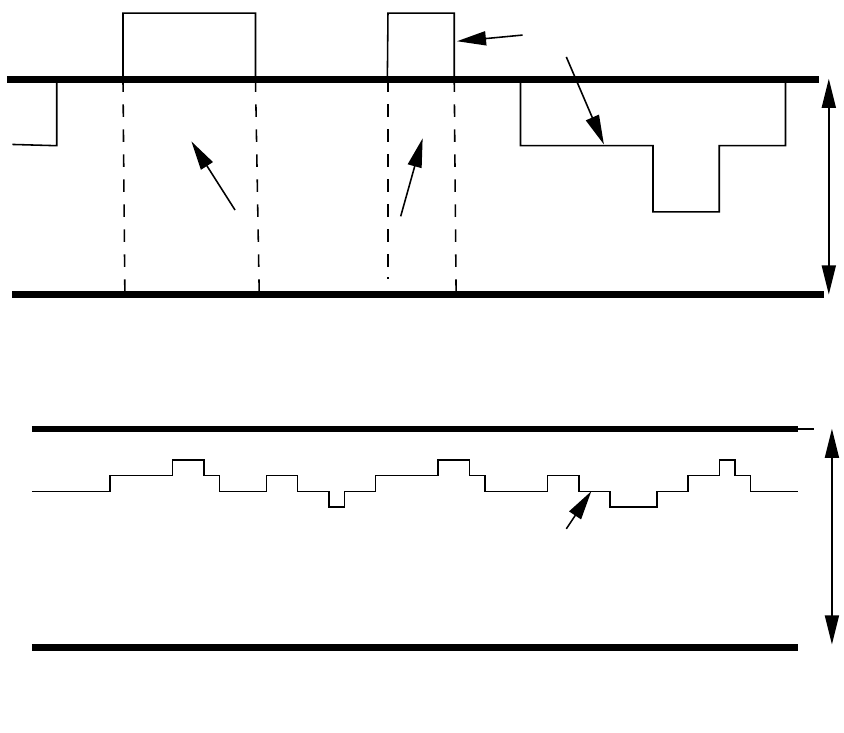_t}}
\caption{Sharing $L$ wavelengths between $Z(t)$ active queues - no switch times}\label{fig:sharing}
\end{figure}
As a preliminary, consider the polling system representing a WDM PON under the assumption
that overhead (for switching between ONUs) is negligibly small.  Assume
the ONUs are each equipped with one tunable transmitter. Let the number of active ONUs at
time $t$ be $Z(t)$, i.e., $Z(t)$ is the number of ONUs with at least one backlogged
queue. When $Z(t) \leq L$, every ONU is served by one wavelength. If $Z(t)>L$,
on the other hand, the PON shares the capacity of the $L$ wavelengths according to the DBA
algorithm.  

Figure \ref{fig:sharing} illustrates the importance of the value of $L$. In the top
figure, $L=3$ and, depending on the ONU loads, it is a frequent occurrence that
$Z(t)>L$. ONUs then share capacity dynamically and it is not easy to predict PON
performance. In the lower figure, $L=15$ and the fluctuations of $Z(t)$ have relatively
lower amplitude. In the depicted realization, every PON is always able to use its tunable
transmitter whenever it is active.  

As $L$ and $N$ increase indefinitely, it is intuitively clear that the fluctuations of
$Z(t)$ become small compared to $L$.  Each ONU then behaves like an independent
polling system with the server capacity of one wavelength. Every ONU queue is stable if
$\ \rho_i < 1$ for $1 \leq i \leq N$ and $ \rho < L$. 

This intuition is confirmed by the propositions of Section~\ref{sec:convergence} in the
case where ONUs have a single queue, service is limited to one packet per cycle, packet
arrivals are Poisson and packet sizes exponential.  A more general proof is, for the
moment,  out of reach. We nevertheless proceed in this and the next sections with the 
reasonable assumption that a ``mean field'' asymptotic is valid as $L$ and $N$ both tend
to infinity. 

It is clear that the above capacity limits can be generalized to ONUs equipped with a
variable number of tunable transmitters. In a large system, each ONU always has access to
its quota of wavelengths whenever it is active. The traffic capacity of ONU $i$ is thus
that of a $t_i$-server polling system: queues are stable if $\rho_i < t_i$ for $1 \leq i
\leq N$ and $ \rho < L$.  


\subsection{Limited service polling system with switch overhead}
 
\begin{figure}
\centering
\scalebox{0.7}{\input{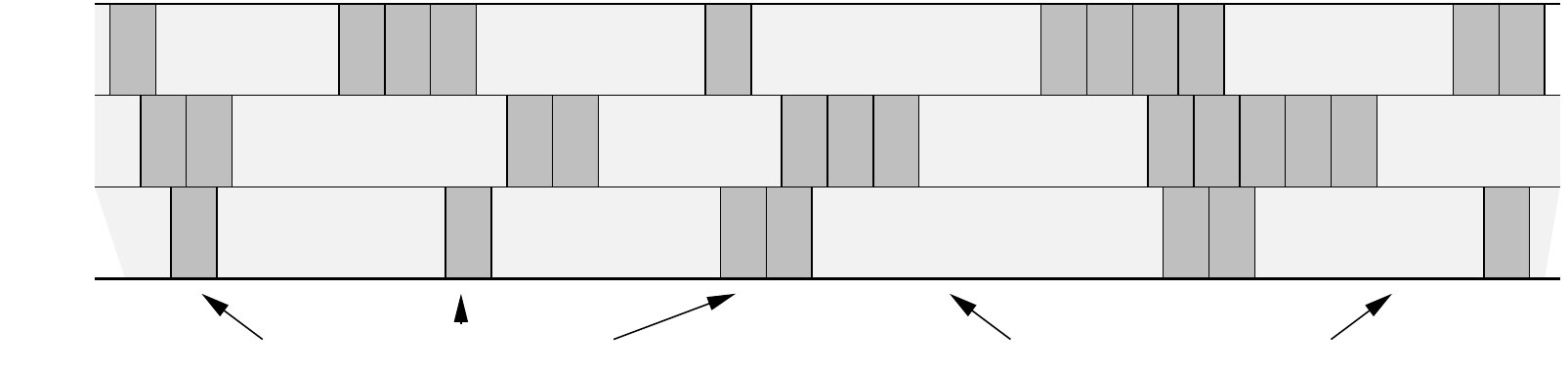_t}}
\caption{Occupation of a 3-wavelength PON by data and overhead}
\label{fig:occupation}
\end{figure}

We now suppose switch overheads are non-negligible: each visit to ONU $i$ consumes
$\Delta_i$ seconds of wavelength capacity. Again, first assume each ONU is equipped with
just one tunable transmitter. Figure \ref{fig:occupation} shows how wavelengths are fully
used in a WDM PON. Each wavelength is either transmitting data or is unavailable because
of the switch overhead. When $\rho<L$, the visit frequency to inactive ONUs is such that
residual capacity $L-\rho$ is entirely consumed by overhead. 

\begin{figure}
\centering
\scalebox{0.7}{\input{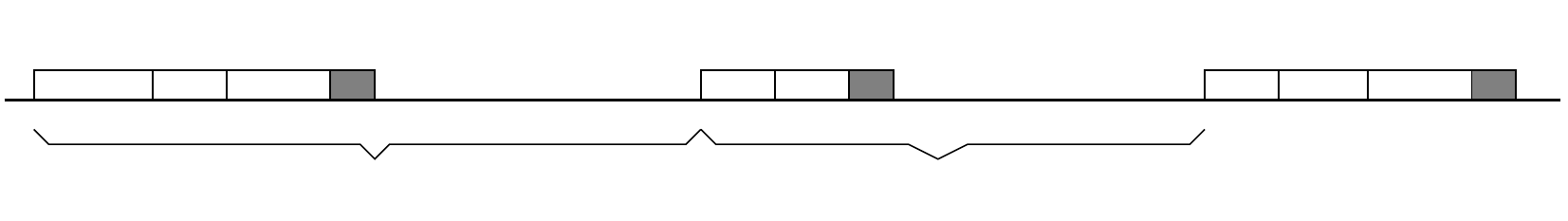_t}}
\caption{ONU local polling system}\label{fig:multipon}
\end{figure}

Figure \ref{fig:multipon}  illustrates the  upstream activity  of an ONU.  This ONU  has 3
queues that  are served cyclically respecting maximum  grant limits on each  visit. In the
figure, queue 3 happens  to be empty in cycle $n$. Each set  of grants terminates with the
switch overhead. After the overhead, the wavelength visits another free ONU (possibly this
one again)  and the  considered ONU sees  a vacation  time ($\geq 0$)  before it  is again
visited by a wavelength. 

When the  system is stable, at  equilibrium the vacation time  of ONU $i$,  i.e., the time
when  no server  is present,  is denoted  by $\Theta_i$.   Note
that  this  random variable depends  \textit{a priori}  on the  activity of  all ONUs.  Under the
mean field  assumption below, this dependence vanishes  in the large system  limit where $N$
and $L$ tend to  infinity.  Let $\theta = \E[\Theta_i]$ and $\delta$ be the switchover
time at equilibrium over all ONUs. First, we
derive a useful general relation. 
\begin{proposition}
At equilibrium, the mean vacation interval $\theta$ of a system with one tunable
transmitter per ONU satisfies
\begin{equation}
\theta = \frac{(N-L)\delta}{L-\rho}. \label{eq:theta}
\end{equation}  
\end{proposition}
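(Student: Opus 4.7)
\medskip

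\noindent\textbf{Proof plan.} The plan is to combine two global accounting identities, one for the wavelength time and one for the ONU time, with a single elementary renewal relation per ONU.

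First I would observe that, since each ONU possesses exactly one tunable transmitter and there is no idle time (a wavelength that finishes a visit immediately incurs the switchover and proceeds to the next ONU, as noted in the discussion around Figure~\ref{fig:occupation}), at any instant exactly $L$ of the $N$ ONUs ``carry'' a wavelength, either in data service or in the subsequent switchover, and the remaining $N-L$ are in a vacation period. Taking time averages, the mean fraction of ONUs that are vacant is $(N-L)/N$. Invoking the mean field symmetry that legitimates speaking of a single $\theta$ common to all ONUs, the usual renewal argument at a generic ONU, whose cycle has mean length $C$ decomposed as occupied time plus vacation $\Theta_i$, gives
\begin{equation*}
\frac{\theta}{C}=\frac{N-L}{N}.
\end{equation*}

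Next I would determine $C$ by accounting for wavelength occupation. In a window of length $T$ the total wavelength time is $LT$; in a stable system, $\rho T$ of it is spent transmitting data, so the remaining $(L-\rho)T$ must be spent in switchovers. Since every visit ends with a single switchover whose mean duration is $\delta$, the total number of visits in $T$ equals $(L-\rho)T/\delta$. By symmetry each ONU receives a fraction $1/N$ of these, so the mean cycle at an ONU satisfies
\begin{equation*}
C=\frac{N\delta}{L-\rho}.
\end{equation*}

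Combining the two displays yields immediately
\begin{equation*}
\theta=\frac{N-L}{N}\,C=\frac{(N-L)\delta}{L-\rho},
\end{equation*}
which is \eqref{eq:theta}.

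The main obstacle is not algebraic but interpretive: one has to justify the symmetry that collapses all per-ONU quantities to the single constants $\theta$ and $\delta$, which in turn relies on the mean field regime announced in Section~\ref{sec:largesystem}. A secondary subtlety is the convention that the switchover immediately following a visit to ONU~$i$ is charged to the wavelength's occupation of ONU~$i$ (so that the ``$L$ occupied ONUs'' count is exact); without this convention the identity $\theta/C=(N-L)/N$ would require an additional accounting term. In a fully asymmetric setting the same two steps go through with $\delta$ replaced by the visit-frequency weighted mean of the $\Delta_i$, and the formula for $\theta_i$ by the appropriate per-ONU analogue, but under the mean field symmetry assumed here both reduce to the stated expression.
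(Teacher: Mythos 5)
Your proof is correct and rests on the same two observations as the paper's own argument: at equilibrium exactly $N-L$ ONUs are vacant and $L-\rho$ wavelengths are in switchover, combined via a rate/renewal balance. The only cosmetic difference is that you route the computation through the mean cycle time $C=N\delta/(L-\rho)$ (essentially re-deriving the paper's Proposition~2 for equal overheads) instead of directly equating the rate $(N-L)/\theta$ at which vacations end with the rate $(L-\rho)/\delta$ at which servers leave switchover periods.
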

\begin{proof}
At equilibrium,  the  number of  ONUs that are not  currently being served is constant
and equal  to $N-L$. Consequently $(N-L)/\theta$  is the rate  at which one of  these ONUs
receives the visit of a server. Similarly, the expected number of wavelengths transmitting
data must  be equal to the  offered traffic $\rho$.  The remainder, $L-\rho$, is  the mean
number of servers in overhead period and consequently $(L-\rho)/\delta$ is the rate at
which servers leave a switch overhead period. The identity of the proposition follows.
\end{proof}
To exploit Equation~(\ref{eq:theta}) we must make the following additional assumption.

\noindent \textit{Mean field assumption}
We assume successive intervals $\Theta^n_i$ are independent and identically distributed
for all $i$ and all $n$ with mean $\theta$. This means the local ONU polling system with
vacation depicted in Figure \ref{fig:multipon} can be analyzed as an independent system,
the impact of the other ONUs being manifested through the value of the mean field
parameter $\theta$. 

We maintain  this assumption is reasonable  when $N$ and $L$  are large and  when, for each
wavelength, the choice  of the next ONU to  visit is made uniformly at  random among those
that are  eligible.  The latter  assumption applies for  both random polling  and periodic
polling in  the large system  limit.  This, roughly speaking,  is because
the  past activity of  any ONU  has negligible  impact on  future wavelength  visits while
available destinations for a wavelength's next visit hardly depend on its own history. The
system is effectively  memoryless and, indeed, we would further  claim the distribution of
the  $\Theta^n_i$  is exponential  in  the  large system  limit.  This property is  not
used for the present capacity results.

To evaluate $\theta$, note that, under the mean field assumption, the system represented
in Figure \ref{fig:multipon} behaves like a classical single server periodic polling
system with  limited gated service and expected overhead per cycle equal to $\Delta_i +
\theta$. By Proposition \ref{prop:multiserver}, its mean cycle time is $(\Delta_i +
\theta)/(1-\rho_i)$.  Straightforward calculations show that the
overall mean overhead $\delta$ defined above can then be expressed as 
\begin{equation}
\delta =\frac{\sum (1-\rho_i)\Delta_i/(\Delta_i + \theta)}{\sum (1-\rho_i)/(\Delta_i + \theta)}.
\label{eq:delta}
\end{equation}
Eliminating $\delta$ in (\ref{eq:theta}) and (\ref{eq:delta})  yields an equation that can
be solved for $\theta$. When the $\Delta_i$ are all equal (to $\delta$), $\theta$ is given
directly by (\ref{eq:theta}). Otherwise, we must resort to numerical evaluation. 

Stability conditions result from those of the individual ONU polling systems where the overhead and vacation are assimilated to the per cycle switch time.

\begin{proposition}
Under the mean field assumption, the WDM PON with one tunable transmitter per ONU,
represented as a multiserver, limited-gated polling system where no more than one server
can simultaneously attend any queue, is stable if  
\begin{equation}
 \rho_i + \frac{\rho_{ij}}{d_{ij}}(\Delta_i +\theta) < 1, \quad 1\leq i\leq N,\quad 1\leq
 j\leq n_i.
\label{eq:conditions}
\end{equation}
where $\theta$ is derived from (\ref{eq:theta}) and (\ref{eq:delta}).
\end{proposition}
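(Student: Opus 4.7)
The plan is to reduce the stability analysis of the coupled multiserver WDM system to that of $N$ independent single--server polling systems, one per ONU, and then invoke Proposition~\ref{prop:multiserver} on each of them.

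First I would recall the geometric picture of Figure~\ref{fig:multipon}: at each cycle, ONU~$i$ is visited by a single wavelength, serves its $n_i$ queues according to the limited--gated policy with grant caps $d_{ij}$, then experiences a physical switch overhead of length $\Delta_i$, and finally a vacation of length $\Theta_i^n$ until the next wavelength arrives. Under the mean field assumption, the $\Theta_i^n$ are i.i.d.\ across $n$ (and independent across ONUs) with common mean $\theta$ obtained from the coupled equations (\ref{eq:theta}) and (\ref{eq:delta}). The crucial consequence is that the dynamics at ONU~$i$ decouple from the other ONUs and reduce to an ordinary single--server cyclic polling system with $n_i$ queues and a per--cycle ``switch plus vacation'' time whose expectation is $\Delta_i + \theta$.

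Next I would apply Proposition~\ref{prop:multiserver} to this local single--server polling system with $L=1$, a single ONU contributing the switchover term, and effective overhead $\Delta_i+\theta$ in place of $\sum_k \Delta_k$. The condition (\ref{eq:multistable}) then becomes
\begin{equation*}
\rho_i + \frac{\rho_{ij}}{d_{ij}}\bigl(\Delta_i+\theta\bigr) < 1, \qquad 1\leq j \leq n_i,
\end{equation*}
which is precisely (\ref{eq:conditions}). Because the mean field assumption makes the $N$ local systems independent, each being stable under its own condition implies stability of the whole PON.

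The delicate point, which should be acknowledged but not re--established here, is the legitimacy of replacing the true vacation process by an i.i.d.\ sequence with the correct mean: this is exactly what the mean field assumption buys us, and its rigorous justification in a simple configuration is the object of Section~\ref{sec:convergence}. Once that assumption is granted, the proof is essentially a bookkeeping exercise that packages $\Delta_i+\theta$ as the aggregate non--service time per cycle and invokes the already--established single--server stability result. No further calculation is needed beyond observing that $\theta$ itself is a well--defined solution of the system (\ref{eq:theta})--(\ref{eq:delta}) under the stated inequalities.
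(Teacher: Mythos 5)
Your proposal is correct and follows exactly the paper's route: the paper's proof is the one-line remark that the result ``follows directly from Proposition~\ref{prop:multiserver} on making the appropriate substitutions,'' i.e.\ treating each ONU under the mean field assumption as an isolated single-server limited-gated polling system with per-cycle overhead $\Delta_i+\theta$ and applying condition~(\ref{eq:multistable}) with $L=1$. Your write-up simply makes those substitutions explicit and correctly flags that the i.i.d.\ vacation structure is exactly what the mean field assumption supplies.
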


The proof follows directly from Proposition \ref{prop:multiserver} on making the appropriate substitutions. When the overhead is the same for each ONU we readily deduce the  following.
\begin{corollary}
When $\Delta_i=\delta$ for all $i$, every queue is stable if 
\begin{equation}
\rho + \max_{ij} \left\{ \frac{\rho_{ij}}{d_{ij}} \frac{\delta(N-\rho)}{1 -\rho_i} \right\} < L.
\label{eq:samedelta}
\end{equation}
\end{corollary}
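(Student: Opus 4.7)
The plan is to derive the corollary by straightforward algebraic manipulation of the stability condition (\ref{eq:conditions}) of the preceding proposition, using the hypothesis $\Delta_i = \delta$ to eliminate $\theta$ in closed form.

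First I would check that when all $\Delta_i$ coincide, equation (\ref{eq:delta}) becomes self-consistent and places no new restriction on $\delta$. Indeed, with $\Delta_i=\delta$ the factor $\Delta_i/(\Delta_i+\theta)=\delta/(\delta+\theta)$ pulls out of the numerator sum, and $1/(\delta+\theta)$ pulls out of the denominator sum, so both $\sum(1-\rho_i)$ factors cancel and the right hand side reduces to $\delta$. This means $\theta$ is determined solely by (\ref{eq:theta}), namely $\theta = (N-L)\delta/(L-\rho)$.

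Next I would substitute this value into the per-queue stability condition
\begin{equation*}
\rho_i + \frac{\rho_{ij}}{d_{ij}}(\Delta_i + \theta) < 1.
\end{equation*}
Since $\Delta_i + \theta = \delta + (N-L)\delta/(L-\rho) = \delta(N-\rho)/(L-\rho)$, the condition becomes
\begin{equation*}
\rho_i + \frac{\rho_{ij}}{d_{ij}}\cdot\frac{\delta(N-\rho)}{L-\rho} < 1.
\end{equation*}
Rearranging (multiplying by $(L-\rho)/(1-\rho_i)$, which is positive under the stability hypothesis $\rho_i<1$ and $\rho<L$) yields
\begin{equation*}
\frac{\rho_{ij}}{d_{ij}}\cdot\frac{\delta(N-\rho)}{1-\rho_i} < L - \rho.
\end{equation*}
Taking the maximum over all pairs $(i,j)$ and transposing $\rho$ to the right side gives exactly (\ref{eq:samedelta}).

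There is no real obstacle here; the only observation worth flagging is the self-consistency of (\ref{eq:delta}) in the equal-overhead case, which legitimizes using (\ref{eq:theta}) directly to obtain $\theta$. The rest is routine algebra, and I would simply present the substitution and the rearrangement in one or two short displays.
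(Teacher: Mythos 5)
Your derivation is correct and is exactly the route the paper intends: the paper states the corollary is ``readily deduced'' from condition (\ref{eq:conditions}) when the $\Delta_i$ are equal, and your substitution $\Delta_i+\theta=\delta(N-\rho)/(L-\rho)$ followed by the rearrangement is that deduction, with the useful extra observation that (\ref{eq:delta}) becomes vacuous so $\theta$ comes directly from (\ref{eq:theta}) --- a point the paper also makes in the preceding text. Nothing is missing.
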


\subsection{Local stability}
\label{sec:localstability}
Limited gated service is  designed to limit the impact on ONU  performance when some users
saturate  their   share  of   PON  capacity.    Let  the  set   of  saturated   queues  be
$\mathcal{S}$. To derive stability conditions for  the other queues we must simply replace
$\Delta_i$  by $\Delta_i  +  \sum_{j:(i,j)\in\mathcal{S}} d_{ij}$  in (\ref{eq:delta})  to
compute $\theta$ and substitute for $\Delta_i$ and $\theta$ in (\ref{eq:conditions}).

For a system with given traffic distribution, the following algorithm determines the set $\mathcal{S}$ of saturated queues.
\begin{enumerate}
\item $\mathcal{S} =\phi$, the empty set. 
\item find $(i,j) = \argmax_{(i,j) \notin \mathcal{S}} \{\frac{\rho_{ij}}{d_{ij}}\Delta_i\}$. \label{algoloop}
\item if (\ref{eq:conditions}) is satisfied for $(i,j)$ then stop.
\item else,
\begin{itemize}
\item $\Delta_i \leftarrow \Delta_i+d_{ij}$,
\item $\rho  \leftarrow \rho - \rho_{ij}$,
\item $\mathcal{S} \leftarrow \mathcal{S} \cup (i,j)$.
\end{itemize}
\item repeat from step \ref{algoloop}.
\end{enumerate} 

\subsection{Multiple transmitters}
To simplify the presentation, we have so far assumed each ONU has exactly one tunable
transmitter. However, the above results apply immediately with slight modification to a
variable number of transmitters if we make the following assumption. The wavelengths on
terminating one service visit an eligible ONU with probability proportional to its number
of free transmitters. This would result in the large system limit if each wavelength had a
fixed cycle of visits chosen as a random permutation of \textit{transmitters}.  
In condition (\ref{eq:conditions}), it is necessary to replace 1 on the right hand side by $t_i$. 

\subsection{Frame based GPON DBA}

In GPON, a possible DBA is to divide the capacity available in a frame (or several
successive frames) between ONU queues, applying a max-min fair algorithm. The overhead per
frame is fixed and equal to $\sum \Delta_i$. The residual capacity would be shared by the
OLT between active ONU queues. Queue $j$ of ONU $i$ would receive an allocation when
active proportional to its weight  $w_{ij}$.  

Assume the frame duration $f$ and the $\Delta_i$ tend to zero while the ratio $\Delta_i/f$
remains equal to $\delta_i$. The underlying service system is now weighted head of line
processor sharing with a maximum service rate for ONU $i$ equivalent to
$t_i(1-\delta_i)$. 

The capacity of this system appears intractable in general
\cite{BB07}. In the large system limit where the number of queues and the system capacity
increase indefinitely, however, we intuitively expect the service rate constraint to apply
to all ONUs, i.e.,  with high probability, every ONU is served when busy at its maximum
rate. The system in fact behaves as discussed in Section~\ref{sec:nooverhead} and the
stability conditions are  
\begin{eqnarray}
\rho < L(1-\sum \delta_i)\textrm{, and }
\rho_i < t_i(1 - \delta_i).
\end{eqnarray}

\section{Proof of  Mean Field Convergence} \label{sec:convergence}
The above results rely on the assumption that, as $N$ and $L$ get large, the individual
ONU queue sets behave as independent systems. This is a classical mean field assumption
that would ideally need formal justification. In this section we prove the mean field
property under some further simplifying assumptions.  

\noindent \textit{Homogeneous limited server polling system without switch overhead times.} 
We consider an $S_N$-server polling system where each of $N$ queues can be served by only one server at a time. After each service, a server chooses a new, non-empty queue uniformly at random. Moves between queues are instantaneous, there is no switchover time. Customer arrivals are Poisson of intensity $\lambda$ in each queue and service times are exponential with rate $\mu$. The queue load is $\rho=\lambda/\mu$.

For $t\geq  0$,  let $Q_i(t)$  be  the  number  of  customers  in the  $i$th  queue  at  time  $t$  and 
$I_i(t)\in\{0,1\}$ an indicator of the  presence of  a server  at  queue  $i$. Clearly, if $Q_i(t)=0$ then $I_i(t)=0$. The process
\[
(X(t))\stackrel{\text{def.}}{=} ((Q_n(t),I_n(t), 1\leq n\leq N))
\]
is then a Markov process taking values in ${\cal S}=\{x=(x_n,i_n)\in(\N\times\{0,1\})^N:x_n>0 \text{ if } i_n=1\}$.

We consider a ``large'' system where
\[
\lim_{N\to+\infty} \frac{S_N}{N}=s, \text{ with } s\in(0,1).
\]
Note that  the ``natural'' necessary stability conditions are
$N\rho< S_N$ or, in the limit, $\rho<s$. 

First assume the initial occupancies $Q_i(t)$ of each queue are independent and geometrically distributed with parameter $\rho$.   As $N$ grows large, we have
\[
\sum_{n=1}^N \ind{Q_n(0)>0}\sim N \P(Q_1(0)>0)=N\rho< Ns\sim S_N.
\]
Under this regime there is an infinite number of idle servers so that the vector
$(I_n(0))$ is uniquely determined. The following proposition shows that
this holds with high probability on any finite time interval. 
\begin{proposition}
If $\rho<s$ and the initial state $(X(0))=((Q_n(0),I_n(0))$ is such that the variables $Q_n(0)$ are
independent with a geometric distribution with a parameter $\rho$ then, for each $T\geq 0$, there exists $\eps>0$ such that 
\[
\lim_{N\to +\infty}\P\left(\sup_{0\leq s\leq T}  \sum_{n=1}^N \ind{Q_n(t)>0} < S_N-\eps N \right)=1.
\]
\end{proposition}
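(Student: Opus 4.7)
The plan is to couple the polling system with $N$ independent M/M/1 queues of load $\rho$, identify the two systems on a suitable stopping-time interval, and then control the simpler iid system by concentration. The decisive structural observation is that whenever $Z(t)<S_N$ every non-empty queue has a server attached, so each individual queue experiences service at instantaneous rate $\mu$, exactly as in an independent M/M/1 model: the 1-limited service mechanism merely relabels the servers among non-empty queues but does not alter the instantaneous service rate of any individual queue. To start, since the indicators $\ind{Q_n(0)>0}$ are iid Bernoulli$(\rho)$ with $P(Q_n(0)>0)=\rho$, Hoeffding's inequality gives $Z(0)\leq N(\rho+\delta)$ with probability at least $1-e^{-c(\delta)N}$ for any $\delta>0$.

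Introduce $N$ independent M/M/1 processes $\tilde Q_n(t)$ initialized at $\tilde Q_n(0)=Q_n(0)$ and driven by the same Poisson arrival streams and service clocks as the polling system (coupled synchronously). Since the geometric$(\rho)$ distribution is the stationary law of M/M/1 with load $\rho$, each $\tilde Q_n$ is strictly stationary, and $\tilde Z(t):=\sum_n\ind{\tilde Q_n(t)>0}$ is therefore Binomial$(N,\rho)$ for every $t\geq 0$. Fix $\eps\in(0,s-\rho)$ and define the stopping time $\tau_N=\inf\{t\geq 0 : Z(t)\geq S_N-\eps N\}$. By the structural remark above, for $t\leq\tau_N$ the polling and iid M/M/1 trajectories coincide, $Q_n(t)=\tilde Q_n(t)$ for every $n$, and in particular $Z(t)=\tilde Z(t)$ on that interval.

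It then suffices to show that $P(\sup_{0\leq t\leq T}\tilde Z(t)<S_N-\eps N)\to 1$, from which $\tau_N>T$ follows by the coupling identity. Pointwise Hoeffding gives $P(\tilde Z(t)\geq N(\rho+\delta))\leq 2e^{-2\delta^2 N}$ for each fixed $t$. To lift this to a uniform bound on $[0,T]$, discretize the interval on a grid of mesh $\log(N)/N$, apply a union bound over the $O(N/\log N)$ grid points, and control the oscillation of $\tilde Z$ inside each sub-interval by a Chernoff tail bound on the number of Poisson arrival and service events (whose total rate is $O(N)$); this oscillation is at most $O(\log N)=o(N)$ with overwhelming probability. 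Choosing $\delta$ so that $\rho+\delta<s-\eps$ then yields the uniform concentration and closes the proof. The main obstacle is the rigorous identification $Q_n(t)=\tilde Q_n(t)$ on $\{t\leq\tau_N\}$: one has to construct an explicit coupling of the polling service clocks with the iid ones that respects the 1-limited server-migration rule, and verify that when spare servers are available the physical exchange of servers between queues does not perturb the marginal queue-length trajectories. Once this coupling identification is secured, the rest of the argument reduces to standard concentration estimates for sums of independent bounded càdlàg processes.
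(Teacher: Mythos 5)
Your proposal is correct and follows essentially the same strategy as the paper: couple the polling system with $N$ independent stationary $M/M/1$ queues started from the same geometric initial state, observe that the two systems have identical trajectories as long as the number of non-empty queues stays below $S_N$ (so that every busy queue has its own server), and reduce the claim to a uniform-in-time law of large numbers for the empirical fraction of non-empty queues in the independent system. The only real divergence is in how that uniform control is obtained: the paper proves tightness of the processes $\bigl(\frac{1}{N}\sum_n \ind{L_n(t)>0}\bigr)$ in the Skorokhod/uniform topology via the law of large numbers for the superposed Poisson arrival and service streams, and concludes that the only limit point is the constant $\rho$; you instead discretize $[0,T]$ on a mesh of order $\log N/N$, apply Hoeffding at the grid points, and bound the oscillation within each cell by a Chernoff estimate on the $O(\log N)$ Poisson events there. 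Your route is more elementary and quantitative (it yields explicit rates), while the paper's is the softer functional-convergence argument; both are valid. You are also slightly more careful than the paper on the identification step, introducing the stopping time $\tau_N$ explicitly rather than asserting the equivalence of the two systems on the good event, which is the cleaner way to avoid the apparent circularity.
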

\begin{proof}
A simple coupling argument is used. Consider $N$ independent $M/M/1$ queues with lengths $L_n(t)$ with
arrival rate $\lambda$ and service rate $\mu$ at equilibrium. For each $t\geq 0$,
$L_n(t)$, $1\leq i\leq N$ are then $N$ independent geometric random variables with parameter
$\rho$. Let 
\[
A_N(t)=\frac{1}{N}\sum_{n=1}^N \ind{L_n(t)>0}.
\]
By the law of large numbers, almost surely,
\[
\lim_{N\to+\infty} A_N(t)=\rho. 
\]
In order to prove a stronger statement, i.e., that this convergence holds not only for a
fixed time but on a whole time interval, one has to work in the Skorokhod space of probability
distributions on real valued
functions on $[0,T]$ with limits on the left and continuous on the right. See
Billingsley~\cite{Billingsley}, for example.   

In the following, $({\cal N}_{x,i})$ denotes an i.i.d. sequence of Poisson
processes with parameter $x$ and ${\cal N}_{x,1}([a,b])$ is the number of points in the
interval $[a,b]$. We first prove that the sequence of
processes $(A_N(t))$ is tight. Since a large variation of $(A_N(t))$ can only be due to
either arrivals or departures, we have, for $\delta$ and $\eta>0$,
\begin{multline*}
\P\left(\sup_{1\leq s,t\leq T: |t-s|\leq \delta} |A_N(t)-A_N(s)|\geq \eta\right)
\\ \leq \P\left(\sup_{1\leq s,t\leq T: |t-s|\leq \delta} \frac{1}{N}\sum_{n=1}^N {\cal N}_{\lambda,n}([s,t])\geq
\frac{\eta}{2}\right)
\\
+ \P\left(\sup_{1\leq s,t\leq T: |t-s|\leq \delta} \frac{1}{N}\sum_{n=1}^N {\cal N}_{\mu,n}([s,t])\geq
\frac{\eta}{2}\right).
\end{multline*}
Using the fact that ${\cal N}_{\lambda,1}+\cdots+{\cal N}_{\lambda,N}$ has the same distribution as
${\cal N}_{N\lambda,1}$, by the law of large numbers for Poisson processes, 
the first term of the right hand side of the above expression becomes arbitrarily small as $N$
grows if  $\delta$ is less than $\eta/2\lambda$. Similarly, the second term
has a similar property. We deduce from this estimate that the sequence of processes
$(A_N(t),0\leq t\leq T)$ is tight for the convergence of the uniform norm and, therefore,
that any of its limiting points is a continuous process. Since the marginals converge to
the constant $\rho$, the only limiting process is the constant process equal to $\rho$.

The sequence $(A_N(t),0\leq t\leq T)$ thus converges {\em as a process } to $(\rho)$. 
In particular, the supremum of $(A_N(t))$ on the time interval $[0,T]$ converges to
$\rho$. Since $S_N\sim s N>\rho N$, for $\eps<s-\rho$, we have  
\[
\lim_{N\to +\infty}\P\left(\sup_{0\leq t\leq T}  A_N(t) < \frac{S_N}{N}-\eps \right)=1.
\]
On the event $\{\sup_{0\leq t\leq T}  A_N(t) < {S_N}/{N}-\eps \}$, there is always an idle
server. In this situation, polling system $(X(t))$ is equivalent to N independent
$M/M/1$ queues so that $(L_n(t),0\leq t\leq T)$ has the same distribution as
$(Q_n(t),0\leq t\leq T)$. 
\end{proof}

The following corollary states the required mean field property given the assumed initial occupancy distribution.
\begin{corollary}
Under the assumptions of the above proposition, on any finite time interval, as $N$ gets
large, the Markov process $(X(t))$ has the same distribution as $N$ independent $M/M/1$
queues. In this limit, the natural condition $\rho<s$ is the stability condition of the system.
\end{corollary}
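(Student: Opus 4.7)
The plan is to bootstrap the corollary directly from the proposition just proved. By that proposition, the event
\[
E_N = \Bigl\{\sup_{0\leq t\leq T}\sum_{n=1}^N\ind{Q_n(t)>0} < S_N - \eps N\Bigr\}
\]
has probability tending to one. On $E_N$ at least $\eps N$ servers are free at every instant of $[0,T]$, so any queue that becomes non-empty can be attended immediately. Under the natural reading of the polling dynamics (an idle server is dispatched without delay to a newly busy queue), this forces $I_n(t)=\ind{Q_n(t)>0}$ throughout $[0,T]$, and every non-empty queue is served at rate $\mu$.

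Next I would make the coupling explicit. Drive the polling system $(X(t))$ and a reference system of $N$ independent $M/M/1$ queues $(L_n(t))$ with the same Poisson arrival processes $({\cal N}_{\lambda,n})$, the same exponential service clocks, and identical initial occupancies drawn from the geometric stationary law with parameter $\rho$. On $E_N$ each arrival and each potential completion acts on the two systems in the same way, so the trajectories $(Q_n(t))$ and $(L_n(t))$ coincide sample-path-wise on $[0,T]$. Since $\P(E_N)\to 1$, for any bounded continuous functional $F$ on the Skorokhod space $D([0,T])$ one obtains $\E[F(X)]-\E[F(L)]\to 0$, which is the distributional equivalence asserted by the corollary.

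For the stability assertion I would transport the stationary law across the coupling. The product of geometric$(\rho)$ laws is invariant for the reference $N$ independent $M/M/1$ queues when $\rho<1$, and under the coupling above it is therefore asymptotically invariant for $(X(t))$ as $N\to\infty$. This gives a stationary regime for the polling system whenever $\rho<s$. The reverse direction is immediate from work conservation: the total arrival rate $N\lambda$ must be no greater than the total service capacity $S_N\mu$, i.e.\ $\rho\leq s$, is necessary. Together this identifies $\rho<s$ as the stability condition.

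The main obstacle, in my view, is not the asymptotic estimate (which is handled by the preceding proposition) but the verification that ``idle server always available'' truly collapses the polling dynamics to independent $M/M/1$ queues. One must either adopt the convention that an idle server is instantaneously reassigned to a newly-busy queue, or otherwise argue that the lag introduced by the rule ``servers move only after completing a service'' is negligible on $E_N$. Once this point is fixed, the rest of the argument is just the tightness-plus-identification-of-limit routine already executed in the proof of the proposition.
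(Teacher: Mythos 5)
Your proposal is correct and follows essentially the same route as the paper: the corollary is an immediate consequence of the preceding proposition, whose proof already ends with exactly the coupling you describe (on the high-probability event that an idle server is always available, the polling system coincides pathwise with $N$ independent $M/M/1$ queues driven by the same arrivals and service clocks). The convention issue you flag (instantaneous reassignment of an idle server to a newly non-empty queue) is indeed the implicit reading the paper adopts, so your reconstruction matches the intended argument.
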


We conclude this section by a result which states that, starting from any initial state,
the process very rapidly enters a set of states where many servers are idle.
\begin{proposition}
If $\rho<s$ and $x=(q_n,i_n)\in{\cal S}$ is such that $|\{n:q_n>0\}|>S_N$ and
$q_1+\cdots+q_N<CN$ for some constant $C$, the infimum
\[
T_N=\inf\{s, \left|\{n,Q_n(t)>0\}\right|<S_N\},
\]
satisfies
\[
\frac{1}{N}\E_x(T_N)\leq \frac{C}{\mu S_N/N-\lambda}.
\]
\end{proposition}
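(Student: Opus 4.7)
The plan is to track the total number of customers in the system, $W(t) = Q_1(t) + \cdots + Q_N(t)$, and use a drift argument on it up to time $T_N$. The key observation is that on the event $\{t < T_N\}$, by definition more than $S_N$ queues are non-empty, so every one of the $S_N$ servers is necessarily busy serving some queue. Consequently the process $W$, when stopped at $T_N$, is a birth-death-type process whose instantaneous arrival rate is $N\lambda$ (independent Poisson arrivals into all $N$ queues) and whose instantaneous departure rate is exactly $\mu S_N$ (one for each busy server). In particular the drift of $W$ before $T_N$ is the strictly negative constant $N\lambda - \mu S_N$, which is where the assumption $\rho < s$ (equivalently $\mu S_N - N\lambda > 0$ for $N$ large enough) is used.

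First I would write down the compensated martingale. Using the standard martingale decomposition of the pure jump Markov process $(X(t))$, the process
\[
M(t) = W(t\wedge T_N) - W(0) - (N\lambda - \mu S_N)(t\wedge T_N)
\]
is a mean-zero (local) martingale, and the optional stopping theorem at the bounded time $t\wedge T_N$ gives $\E_x[M(t)] = 0$, i.e.
\[
\E_x[W(t\wedge T_N)] = W(0) + (N\lambda - \mu S_N)\, \E_x[t\wedge T_N].
\]

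Next I would use $W \geq 0$ to rearrange this into the desired inequality. Since $\mu S_N - N\lambda > 0$ and $\E_x[W(t\wedge T_N)] \geq 0$, we obtain
\[
(\mu S_N - N\lambda)\, \E_x[t\wedge T_N] \leq W(0) \leq CN,
\]
using the assumption $q_1+\cdots+q_N < CN$. Letting $t\to+\infty$ and applying monotone convergence yields
\[
\E_x[T_N] \leq \frac{CN}{\mu S_N - N\lambda},
\]
and dividing by $N$ gives exactly the bound of the statement.

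The only step requiring any care is justifying the martingale decomposition with the (a priori unbounded) stopping time $T_N$, but this is standard because we stop at the bounded time $t\wedge T_N$ before taking $t\to\infty$; no uniform integrability issue arises since we only extract a one-sided inequality from $W \geq 0$. There is no real analytic obstacle here: the argument is essentially a linear drift computation made possible by the crucial structural fact that ``more non-empty queues than servers'' forces full server occupancy, so that the aggregate workload ignores how the customers are distributed among queues.
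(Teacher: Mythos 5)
Your proposal is correct and follows essentially the same route as the paper: both track the total queue length, use the fact that more than $S_N$ non-empty queues forces all $S_N$ servers to be busy before $T_N$, and extract the bound from a drift/(super)martingale inequality combined with nonnegativity of the total workload and letting $t\to\infty$. The only cosmetic difference is that you phrase the stopped, compensated process as an exact mean-zero martingale via optional stopping, whereas the paper states the equivalent supermartingale inequality obtained by comparing with the difference of two Poisson processes of rates $\lambda N$ and $\mu S_N$.
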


\begin{proof}
Define, for $t\geq 0$,
\[
Z(t)=Q_1(t)+\cdots+Q_N(t).
\]
Since all servers are busy at time $0$ by assumption, as long as $t<T_N$, $Z(t)-Z(0)$ can be
represented as the difference of two Poisson processes with respective parameters $\lambda
N$ and $\mu S_N$, so that $(Z(t\wedge T_N)+(\mu S_N-N\lambda)(t\wedge T_N))$ is a
super-martingale. In particular
\[
\E(Z(t\wedge T_N)+(\mu S_N-N\lambda)(t\wedge T_N))\leq Z(0)\leq CN.
\]
Hence, $(\mu S_N-N\lambda)\E(t\wedge T_N))\leq  CN$ and the desired inequality
follows on letting $t$ go to infinity.
\end{proof}

\section{Evaluation}\label{sec:evaluate}
We report results of a preliminary evaluation of the accuracy of the mean field approximation and quantify the impact of the switch overhead.

\subsection{Convergence to mean field} 
In the simulations, we consider a realistic PON configuration where each wavelength provides an upstream transmission bit
rate of 1 Gbps. We assume that ONUs have a single queue with infinite capacity and one tunable
transmitter. Users send fixed size 1000 byte packets according to a Poisson process.
The grant size is fixed to $8$ $\mu s$ (transmission time of one packet) and switch overhead is set to 1.2
$\mu s$ (15\% of the grant size).

Figure \ref{fig:onus-switchover} shows the proportion of wavelengths in switch overhead
measured in 1ms intervals for   $N=10,50,100$ and 500, respectively. Overall load is
0.2$N$ and $L/N = 0.5$. The figure shows that, as the number of ONUs increases, the
fluctuations of this proportion decrease. The wavelength visit rate to a given ONU,
determined by this proportion, thus converges to a constant. This is why it is plausible
that individual ONUs tend to become statistically independent.

\begin{figure}[]
\centerline{\epsfig{figure=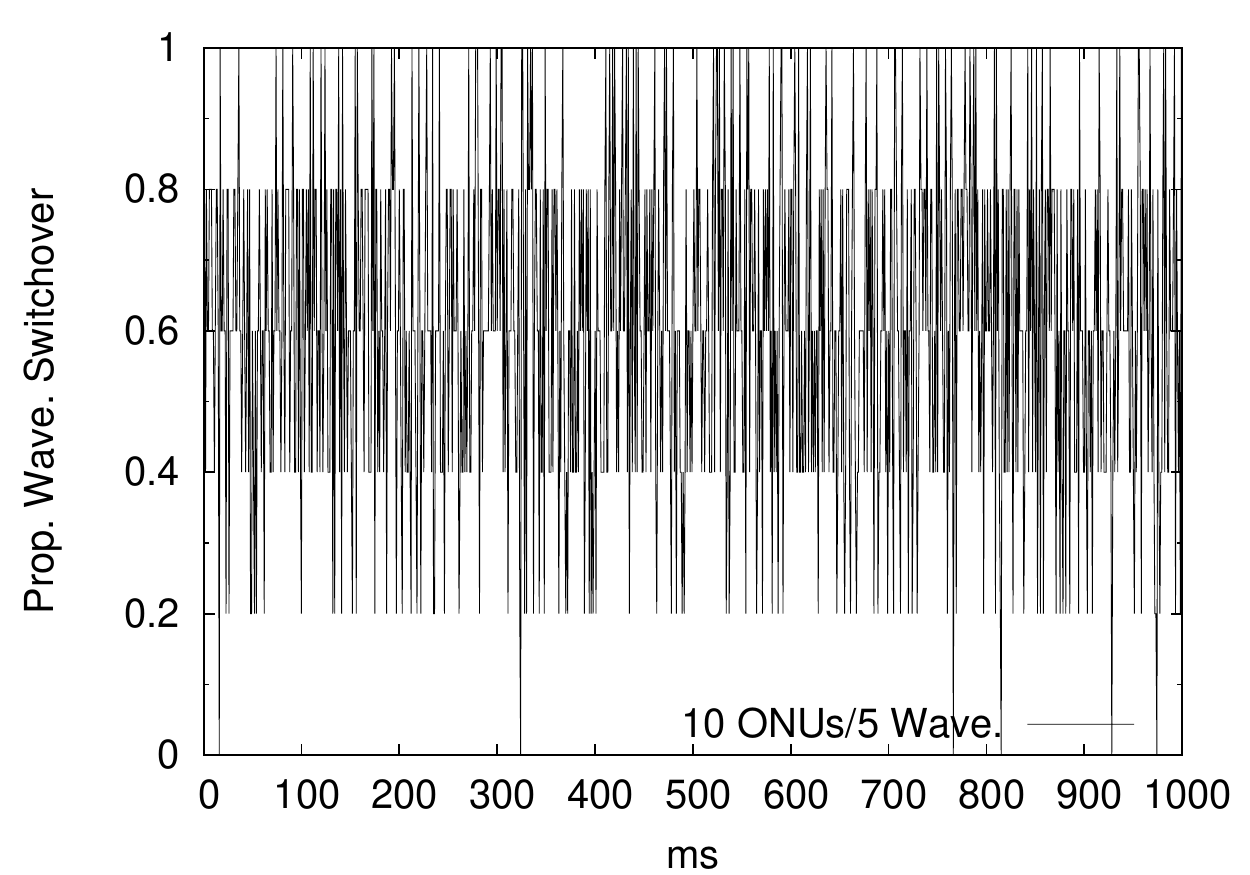,height=1.8in,width=1.8in}
\hspace{-.1in}\epsfig{figure=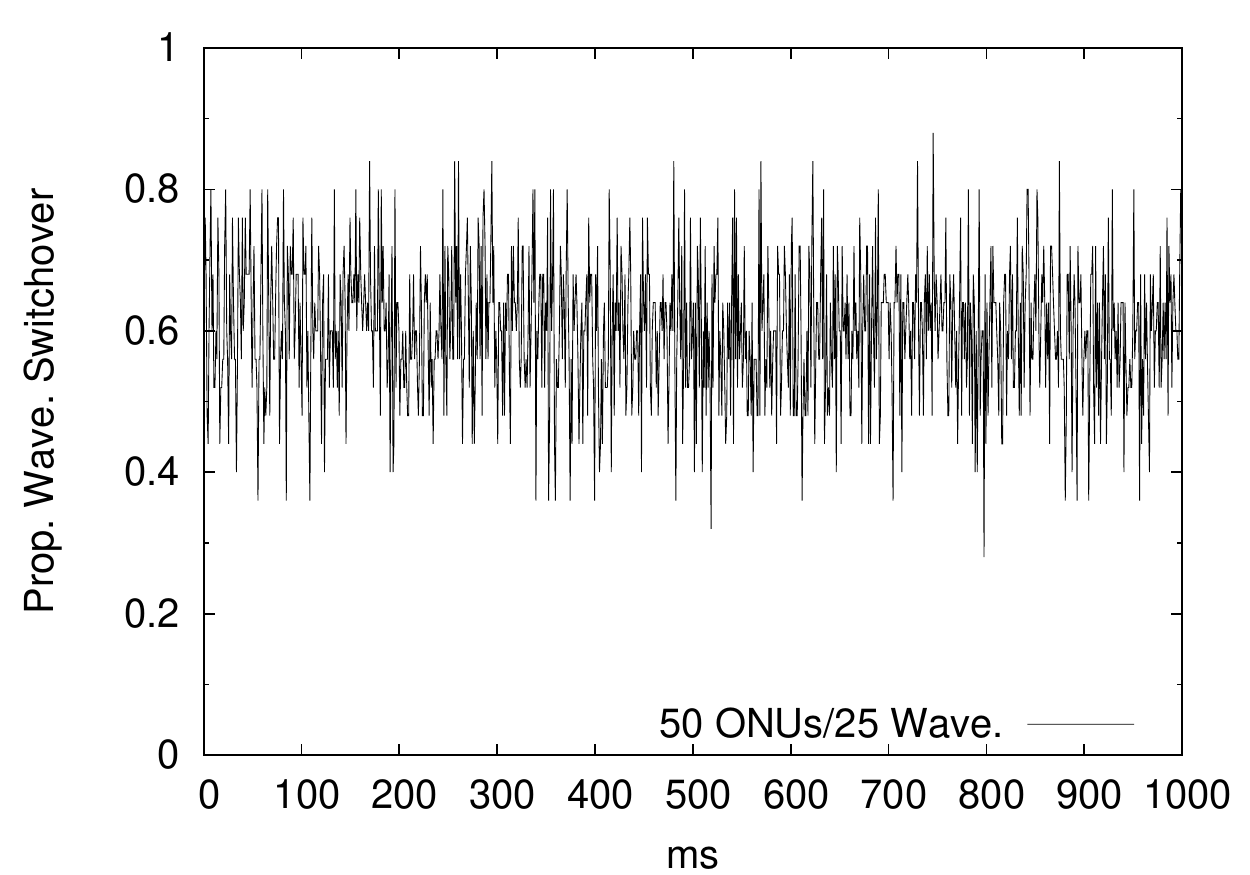,height=1.8in,width=1.8in}}
\centerline{\epsfig{figure=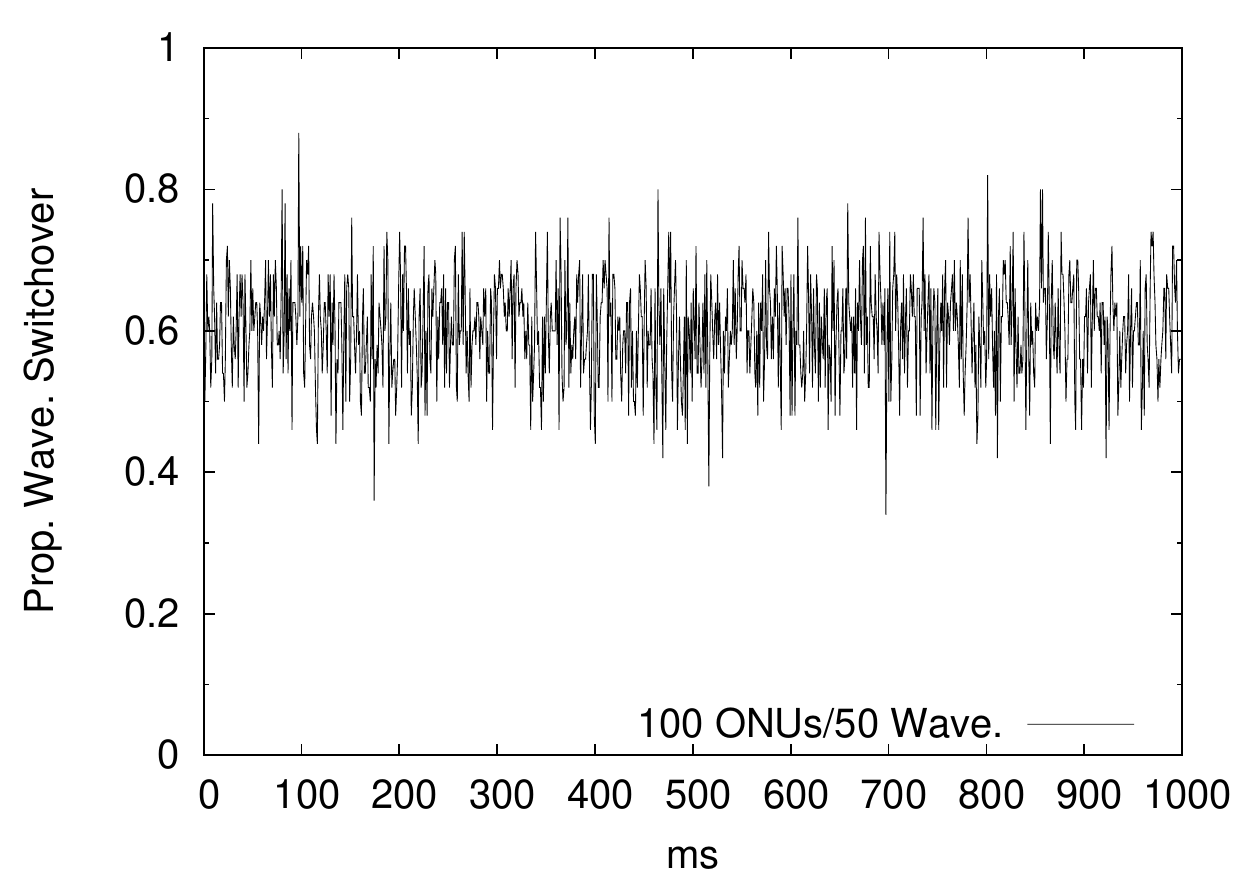,height=1.8in,width=1.8in}
\hspace{-.1in}\epsfig{figure=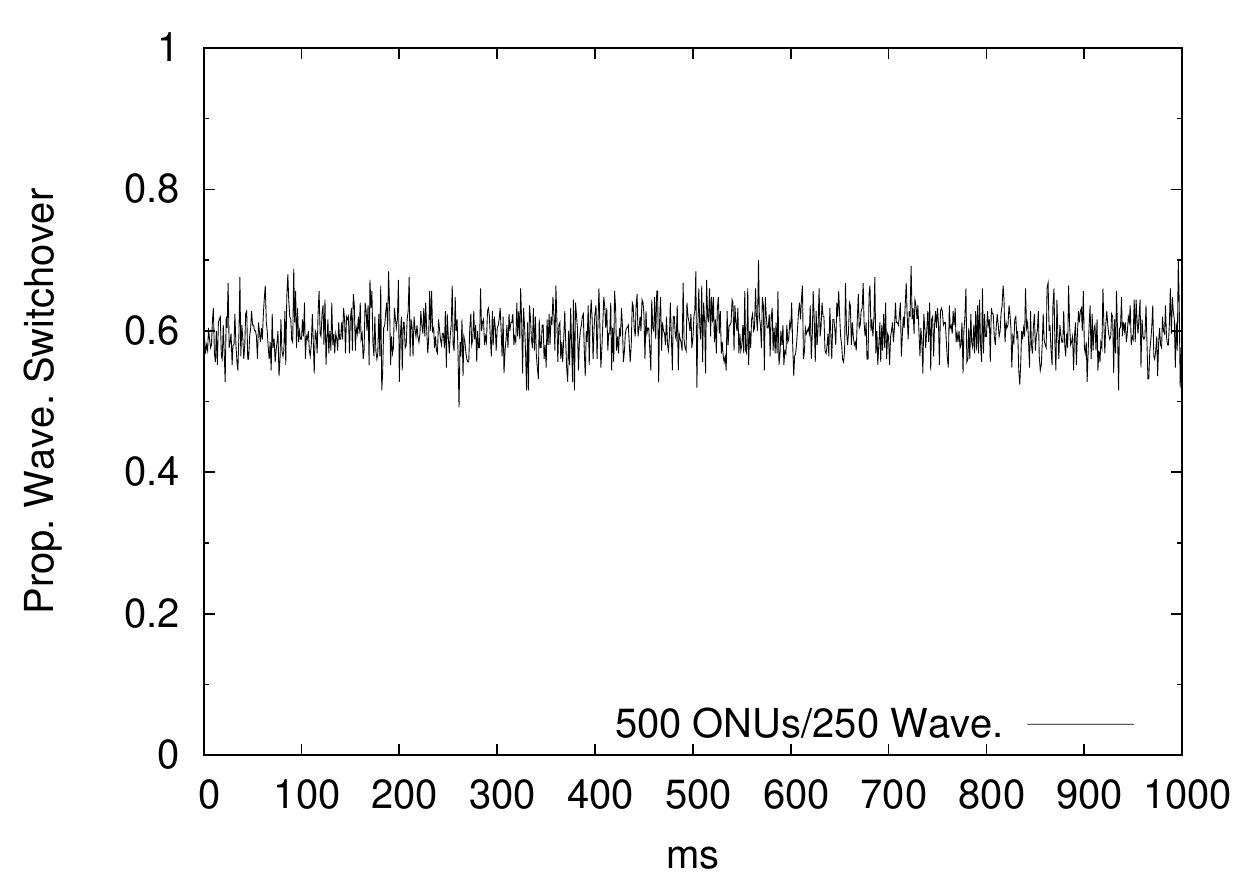,height=1.8in,width=1.8in}}
\caption[]{\label{fig:onus-switchover} Proportion of wavelengths in switch overhead in intervals
of 1 $ms$ for different values of ONUs ($N=10,50,100$ and 500)}
\end{figure}

\subsection{Capacity region}

Figure \ref{fig:stability-balanced} depicts the capacity region of a PON with two classes
of ONUs. Each ONU of the same class has the same load and there is an equal number of ONUs
per class.  $L/N=1/2$. The region to the left and below the plots corresponds to loads for
which the ONU queues are stable.  Units are normalized loads with respect to the total
number of ONUs.  

Figure  \ref{fig:stability-balanced} compares the capacity region determined by
(\ref{eq:samedelta}) with simulation results for $N=6$ and $N=20$. The accuracy of the
results is good for $N=6$ and excellent for $N=20$. Discrepancies for $N=6$ with
unbalanced traffic are not surprising as the number of ONUs is then effectively only
3. Similar precision is observed in Figure \ref{fig:stability-unbalanced} where class 2
has three times as many ONUs as class 1.  Note that PONs are currently designed for around
100 ONUs so that the accuracy of the mean field approximation is largely sufficient.

\begin{figure}[]
\centerline{\epsfig{figure=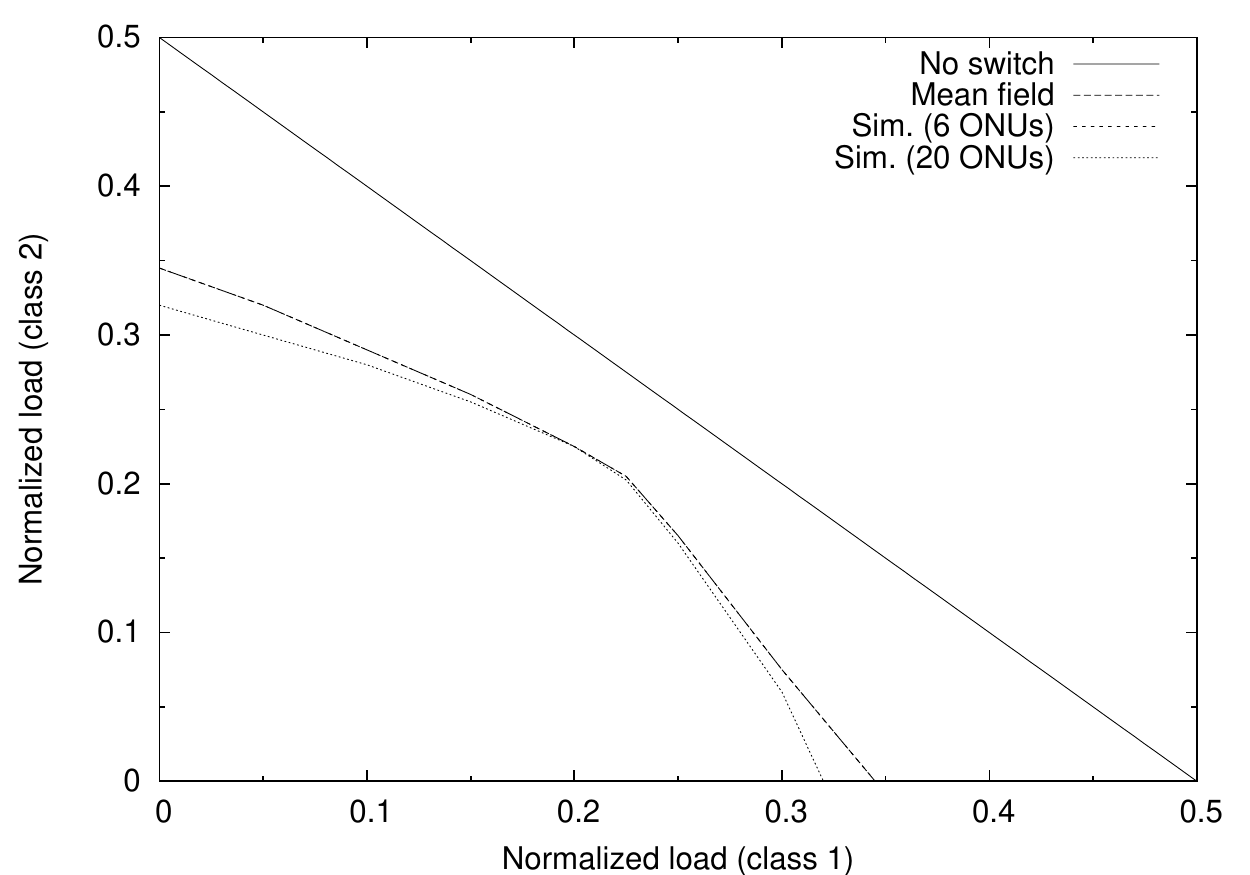,height=2.5in,width=2.5in}}
\caption[]{\label{fig:stability-balanced} Stability region for two classes of ONUs with balanced
number of ONUs}
\end{figure}

\begin{figure}[]
\centerline{\epsfig{figure=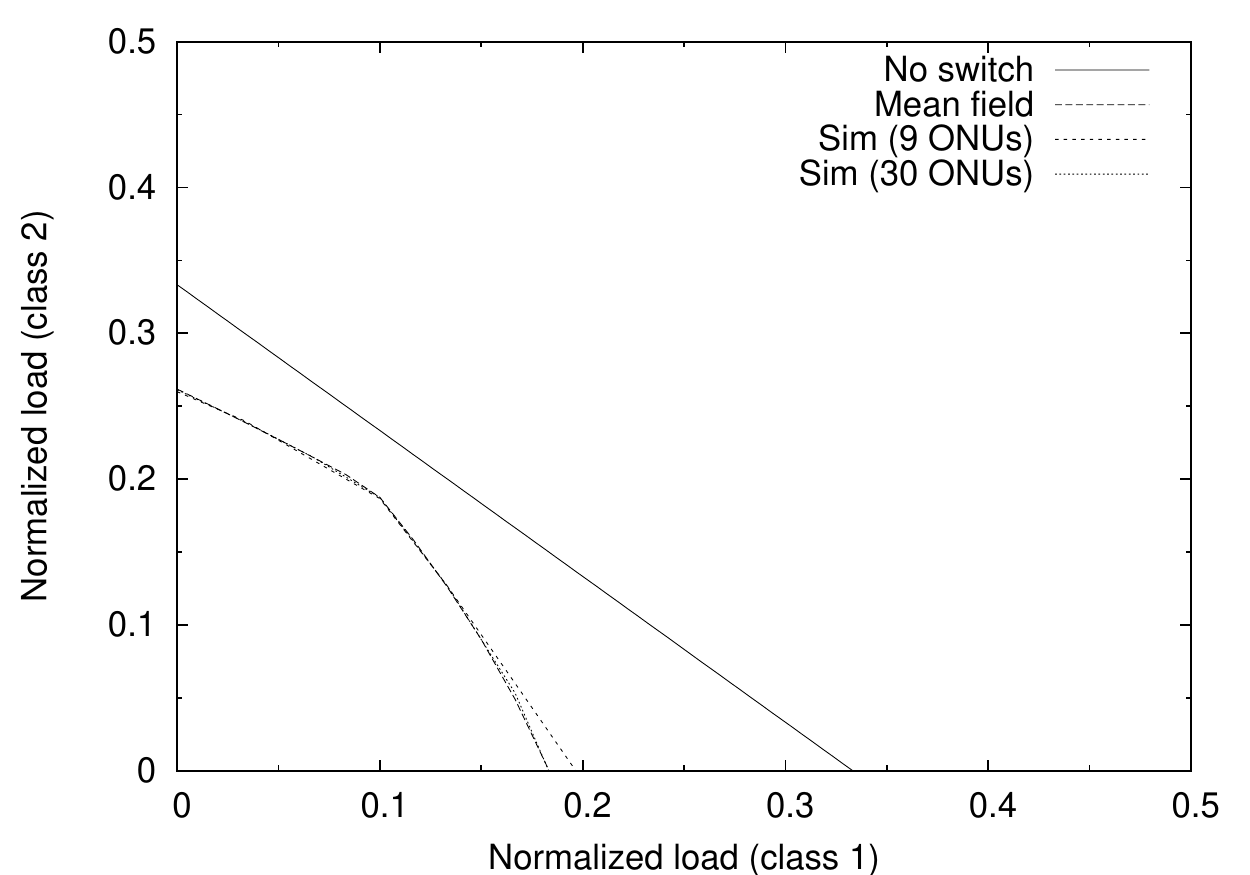,height=2.5in,width=2.5in}}
\caption[]{\label{fig:stability-unbalanced} Stability region for two classes of ONUs with
unbalanced number of ONUs}
\end{figure}

\subsection{Impact of overhead}
Figure \ref{fig:stability-overhead} illustrates the impact of the overhead as a percentage
of the grant limit given by the mean field approximation. For these results, $L/N = 0.4$
and 40\% of ONUs are in class 1. The graphs show that the impact of overhead is
significant.  It is worthwhile increasing the grant size as long as the worst case latency
remains sufficiently small. Note that when some queues are overloaded, their grant is
effectively added to the overhead seen by the other queues, as discussed in
Section~\ref{sec:localstability}.   

\begin{figure}[]
\centerline{\epsfig{figure=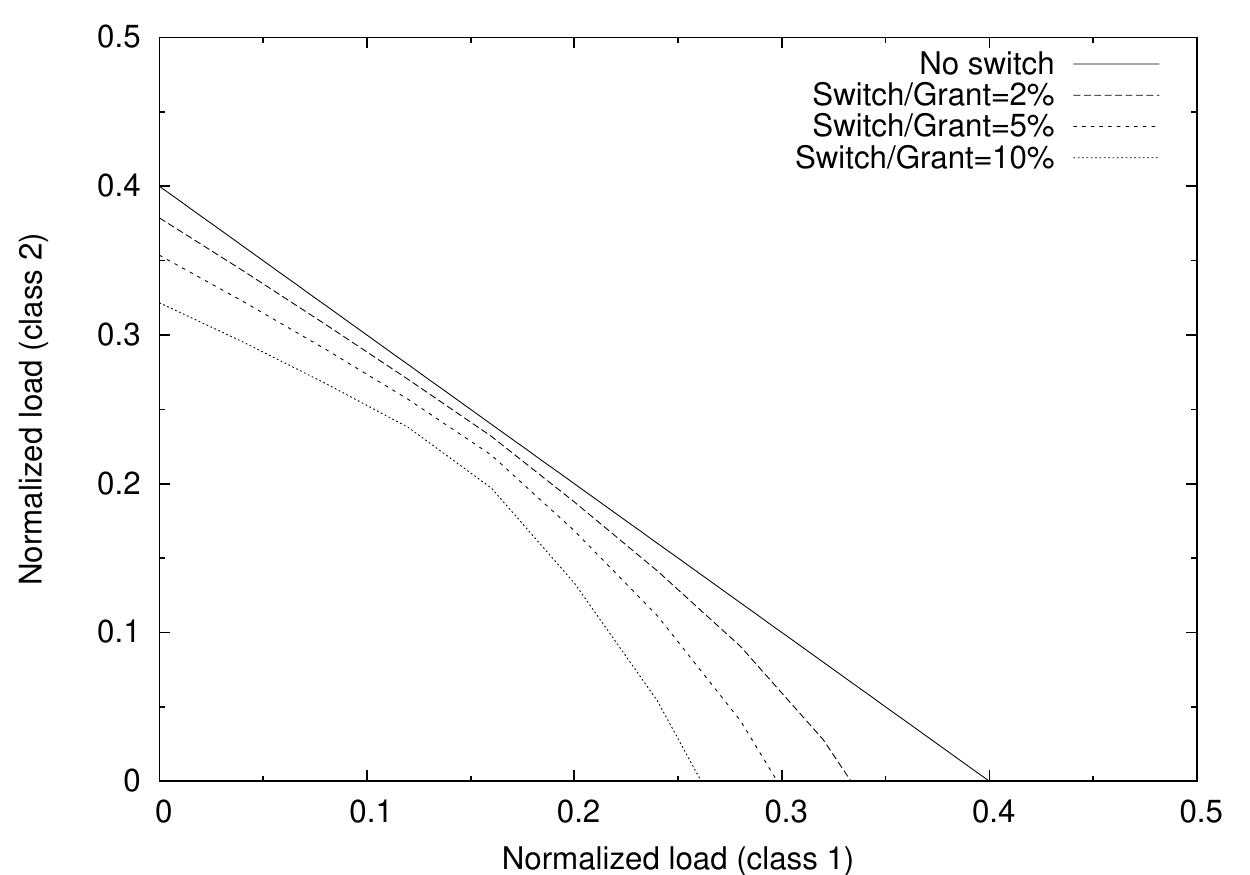,height=2.5in,width=2.5in}}
\caption[]{\label{fig:stability-overhead}  Stability region for two classes of ONUs with
unbalanced number of ONUs for different ratios switchover/Grant Size}
\end{figure}

\section{Conclusion}\label{Conc}
The design of dynamic bandwidth allocation algorithms for WDM PONs brings challenging
performance problems, even when we suppose the DBA does not attempt to realize intra-ONU service differentiation.
The main performance  criterion for these systems is the maximal  traffic capacity they can
achieve, delays being negligibly small until loads are very close to this capacity.  

The natural stochastic models to  be investigated in this domain are multiserver
polling models with both server limits and grant limits. They appear to
be not tractable  in general. For this  reason, we have explored  a large system
approximation based on a mean  field assumption.  Under this assumption,  the behaviour of a
given ONU can be  described as an isolated queueing system independent  of the other ONUs.
The  traffic capacity  analysis for  this limiting  representation of  the PON  notably reveals
the non-trivial impact of  switch overhead. Simulations reveal excellent
accuracy  even  for  systems  much  smaller  than that of real  PONs.  

The  mathematical
justification of  the mean field  approximation has been  provided in a simple  case where
switch times are null and traffic is Markovian.
In future  work we aim to extend  this analysis  to the  realistic case  of general  traffic and
non-zero switch  overhead.   We also intend to use the results derived here to perform a more thorough analysis of DBA performance for EPON and GPON. 

Note finally that the derived
large system approximations  can be applied in other contexts where the natural model is a multiserver  polling
system with both server and grant limits.


\begin{thebibliography}{10}

\bibitem{AFRR10}
N.~Antunes, C.~Fricker, P.~Robert, and J.~Roberts.
\newblock Upstream traffic capacity of a {WDM EPON} under online {GATE}-driven
  scheduling.
\newblock url=http://arxiv.org/abs/1002.2384, internal report, 2010.

\bibitem{Billingsley}
P.~Billingsley.
\newblock {\em Convergence of probability measures}.
\newblock Wiley series in probability and mathematical statistics. John Wiley
  \& Sons Ltd, New York, 1968.

\bibitem{BM98}
S.~Borst and R.~van~der Mei.
\newblock Waiting time approximations for multiple-server polling systems.
\newblock {\em Performance Evaluation}, 31:163--182, 1998.

\bibitem{BB07}
A.~Brandt and M.~Brandt.
\newblock On the stability of the multi-queue multi-server processor sharing
  with limited service.
\newblock {\em Queueing Systems}, 56:1--8, 2007.

\bibitem{DAM07}
A.~R. Dhaini, C.~M. Assi, M.~Maier, and A.~Shami.
\newblock Dynamic wavelength and bandwidth allocation in hybrid {TDM/WDM}
  {EPON} networks.
\newblock {\em J. Lightwave Technology}, 25(1):277--286, 2007.

\bibitem{Down98}
D.~Down.
\newblock On the stability of polling models with multiple server.
\newblock {\em J. App. Probability}, 35(4):925--935, 1998.

\bibitem{FJ98}
C.~Fricker and M.~R. Jaibi.
\newblock Stability of multi-server polling models.
\newblock INRIA Research report, No. 3347, January 1998.

\bibitem{GKS09}
A.~Gliwan, P.~Kourtessis, and J.~Senior.
\newblock Dynamic multi-wavelength {GPON (DMW-GPON)} protocol.
\newblock In {\em London Communications Symposium}, 2009.

\bibitem{Meleard:94}
C.~Graham and S.~M\'el\'eard.
\newblock Chaos hypothesis for a system interactiong through shared resources.
\newblock {\em Probability Theory and Related Fields}, 100:157--173, 1994.

\bibitem{KMP02}
G.~Kramer, B.~Mukherjee, and G.~Pesavento.
\newblock Interleaved polling with adaptive cycle time ({IPACT}): {A} dynamic
  bandwidth distribution scheme in an optical access network.
\newblock {\em IEEE Commun. Mag.}, 40(2):74--80, 2002.

\bibitem{KHA04}
K.~Kwong, D.~Harle, and I.~Andonovic.
\newblock Dynamic bandwidth allocation algorithm for differentiated services
  over {WDM} {EPONs}.
\newblock In {\em ICC'04}, 2004.

\bibitem{Mai09}
M.~Maier.
\newblock {WDM} passive optical networks and beyond: the road ahead [invited].
\newblock {\em IEEE/OSA Journal of Optical Communications and Networking},
  1(4):C1--C16, 2009.

\bibitem{MRC07}
M.~P. McGarry, M.~Reisslein, C.~J. Colbourn, and M.~Maier.
\newblock Just-in-time online scheduling for {WDM} {EPON}s.
\newblock IEEE, 2007.

\bibitem{MRM06b}
M.~P. McGarry, M.~Reisslein, and M.~Maier.
\newblock Bandwidth management for {WDM} {EPON}s.
\newblock {\em Journal of optical networking}, 5(9):637--654, 2006.

\bibitem{MRM06a}
M.~P. McGarry, M.~Reisslein, and M.~Maier.
\newblock {WDM} ethernet passive optical networks.
\newblock {\em IEEE Optical Communications}, pages S18--S25, February 2006.

\bibitem{MRM08}
M.~P. McGarry, M.~Reisslein, and M.~Maier.
\newblock Ethernet passive optical network architectures and dynamic bandwidth
  allocation algorithms.
\newblock {\em IEEE Communications Surveys and Tutorials}, 10(1-4), 2008.

\bibitem{MAM09}
L.~Meng, C.~Assi, M.~Maier, and A.~R. Dhaini.
\newblock Resource management in stargate based ethernet passive optical
  networks ({SG-EPON}s).
\newblock {\em J. Opt. Commun. Netw.}, 1(4):279--293, September 2009.

\bibitem{PHR05}
C.~G. Park, D.~Han, and K.~Rim.
\newblock Packet delay analysis of symmetric gated polling system for {DBA}
  scheme in an {EPON}.
\newblock {\em Telecommunication Systems}, 30(1):13--34, 2005.

\bibitem{SCA09}
B.~Skubic, J.~Chen, J.~Ahmed, L.~Wisinska, and B.~Mukherjee.
\newblock A comparison of dynamic bandwidth allocation for {EPON}, {GPON} and
  next-generation {TDM PON}.
\newblock {\em IEEE Communications Magazine}, pages S40--S48, March 2009.

\bibitem{SKM09}
H.~Song, B.~W. Kim, and B.~Mukherjee.
\newblock Multi-thread polling: {A} dynamic bandwidth distribution scheme in
  long-reach {PON}.
\newblock {\em JSAC}, 27(2):134--142, 2009.

\bibitem{Sznitman:06}
A.~Sznitman.
\newblock {\em \'Ecole d'\'et\'e de Probabilit\'es de Saint Flour}, chapter
  Topics in propagation of chaos, pages 167--243.
\newblock Lecture Notes in Maths. Springer Verlag, 1989.

\bibitem{MR08}
R.~van~der Mei and J.~Resing.
\newblock Polling systems with two-phase gated service.
\newblock {\em Probability in the Engineering and Informational Sciences},
  22:623--651, 2008.

\bibitem{ZM09}
J.~Zheng and H.~T. Mouftah.
\newblock A survey of dynamic bandwidth allocation algorithms for ethernet
  passive optical networks.
\newblock {\em Optical Switching and Networking}, 6:151--162, 2009.

\end{thebibliography}
\end{document}